\renewcommand{\i}{{\rm i}}
\newcommand{\bbbone}{\mathchoice {\rm 1\mskip-4mu l} {\rm 1\mskip-4mu l}
{\rm 1\mskip-4.5mu l} {\rm 1\mskip-5mu l}}
\renewcommand{\d}{{\rm d}}
\newcommand{\rx}{{\mathbb R}}
\newcommand{\cx}{{\mathbb C}}
\newcommand{\tr}{{\rm Tr}}
\renewcommand{\r}{{\rm R}}
\newcommand{\s}{{\rm S}}
\newcommand{\h}{{\mathfrak h}}
\renewcommand{\H}{{\mathcal H}}
\newtheorem{prop}{Proposition}
\newtheorem{lem}{Lemma}
\newtheorem{thm}{Theorem}
\newtheorem{cor}{Corollary}
\begin{document}

\title{Quasi-classical limit of a spin coupled to a reservoir}

\author{Michele Correggi}
\email{michele.correggi@gmail.com}
\homepage{https://sites.google.com/view/michele-correggi}
\affiliation{Dipartimento di Matematica, Politecnico di Milano, P.zza Leonardo da Vinci, 32, 20133 Milano, Italy}

\author{Marco Falconi}
\email{marco.falconi@polimi.it}
\homepage{https://www.mfmat.org/}
\affiliation{Dipartimento di Matematica, Politecnico di Milano, P.zza Leonardo da Vinci, 32, 20133 Milano, Italy}

\author{Michele Fantechi}
\email{michele.fantechi@gmail.com}
\affiliation{Dipartimento di Matematica, Politecnico di Milano, P.zza Leonardo da Vinci, 32, 20133 Milano, Italy}

\author{Marco Merkli}
\affiliation{Department of Mathematics and Statistics, Memorial University of Newfoundland, NL A1C 5S7, St.~John's, Canada}  
\email{merkli@mun.ca}
\homepage{https://www.math.mun.ca/merkli/}
\orcid{0000-0002-3990-6155}

\maketitle

\begin{abstract}
A spin (qubit) is in contact with a bosonic reservoir. The state of the reservoir contains a parameter $\varepsilon$ interpolating between quantum and classical reservoir features. We derive the explicit expression for the time-dependent reduced spin density matrix, valid for all values of $\varepsilon$ and for energy conserving interactions. We study decoherence and markovianity properties. Our main finding is that the spin decoherence is enhanced (full decoherence) when the spin is coupled to quantum reservoir states while it is dampened (partial decoherence) when coupled to classical reservoir states. The markovianity properties depend in a subtle way on the classicality parameter $\varepsilon$ and on the finer details of the spin-reservoir interaction. We further examine scattering and periodicity properties for energy exchange interactions.
\end{abstract}

\section{Introduction and main results}

One of the paradigmatic open quantum system models is the spin-boson model, which describes a two level quantum system in contact with a bosonic environment \cite{BP, RH, Leggettal,HS,JP,KMS,CFM23, Trushetal, MMQuantum1, MMQuantum2, MMAHP}. 
The two level system represents a spin $\frac12$ or a qubit, or more generally two degrees of freedom of a more complex system which exchange energy and information (correlations) with an external agent. The latter, called the environment, the bath, or the reservoir, consists of bosonic degrees of freedom, that is, of a set of quantum oscillators. In our setup, we consider a continuum of oscillators (a bosonic quantum field), which is used for instance to describe the quantized electromagnetic field, described by creation and annihilation operators $a^*(k)$, $a(k)$, $k\in\rx^3$, satisfying the canonical commutation relations
$$
[a(k),a^*(l)]=\delta(k-l).
$$ 
The interacting spin-boson Hamiltonian is
\begin{equation}
\label{He0}
H(\varepsilon) = \tfrac12\omega_0\sigma_z +\int_{\rx^3}\omega(k) a^*(k)a(k) d^3k + \lambda \sqrt\varepsilon\, G\otimes\varphi(g),
\end{equation}
where $\omega_0>0$ is the spin Bohr frequency and $\omega(k)\ge 0$ is the reservoir dispersion relation. The interaction operator contains a mean-field or quasi-classical parameter $0<\varepsilon\le 1$ which we will explain below. It contains further a coupling constant $\lambda\in\rx$, an operator $G=G^*$ acting on the spin and the field operator
\begin{equation}
\label{fieldop}
\varphi(g) = \frac{1}{\sqrt 2}\big[ a^*(g)+a(g)\big],
\end{equation}
where
$$
a^*(g) = \int_{\rx^3} g(k)a^*(k)d^3k
$$
and $g(k)$ is a (complex-valued) function, called the {\em form factor}. The Schr\"odinger-von Neumann dynamics of an initial density matrix $\rho_{\s\r}$ of the combined system is given by ($\hbar=1$ in our units)
$$
\rho_{\s\r}(t) =e^{-i t H(\varepsilon)}\rho_{\s\r}e^{it H(\varepsilon)}.
$$
In this work, we consider factorized initial states (this is not necessary for our arguments to work though, see \cite{CFO}) 
\begin{equation}
\label{ist}
\rho_{\s\r} = \rho_\s\otimes\zeta_\varepsilon,
\end{equation}
in which the spin density matrix $\rho_\s$ is arbitrary and the reservoir density matrix $\zeta_\varepsilon$ describes a `macroscopic' state, populated with many particles or excitations (infinitely many as $\varepsilon\rightarrow 0$).  Our model has two quasi-classical (mean-field) features, which we discuss
in more detail below:

\begin{itemize}
\item[(QC1)] The average number of particles or excitations in the reservoir state is `macroscopically' large $\propto\varepsilon^{-1}$ where $\varepsilon\rightarrow 0$. The reservoir observables are of mean-field type, that is, $n-$body reservoir observables are rescaled with a prefactor $\varepsilon^n$ so that their averages are of order $1$ as $\varepsilon\rightarrow 0$. Due to this scaling, the average of a commutator of two $n$-body mean-field operators is of the order $\varepsilon$. In the limit $\varepsilon\rightarrow 0$ the reservoir becomes classical, in the sense that its observables commute. 
\item[(QC2)] The spin-reservoir interaction term is scaled in the mean-field sense (factor $\sqrt\varepsilon$ in \eqref{He0}), so that the interaction operator $\lambda\sqrt\varepsilon G\otimes\varphi(g)$ is of the same order as the spin energy $\tfrac12\omega_0\sigma_z$,  independently of the value of $\varepsilon$. 
\end{itemize}
\medskip

We present a discussion of these two quasi-classical features in Section \ref{features}.

\subsection{Outline of the main results}

Our main results are on the classical limit ($\varepsilon\rightarrow 0$) of reservoir states (without coupling to a system) and on the effective dynamics of the spin coupled to the classical versus quantum reservoir states (that is, various values of $\varepsilon$).

\begin{itemize}
\item[1.] {\em Classical limit of reservoir states.} \\
We calculate the explicit form of the characteristic function in the classical limit, for the bosonic reservoir in a coherent state, in a Bose-Einstein (BE) condensate and in a thermal equilibrium state. We show that in the classical limit, the BE condensate state is a uniform  mixture of coherent states varying over an angle $\theta\in[-\pi,\pi]$. We show that the classical limit of the equilibrium state is obtained as a high-temperature limit of the corresponding quantum state. 

\item[2.] {\em Spin dynamics for energy-conserving interaction.} \\
We derive the exact dynamics of the spin density matrix for $0\le \varepsilon\le 1$, when the interaction operator $G$ in \eqref{He0} commutes with $\sigma_z$ (energy conserving model). 

\begin{itemize}
\item[(a)] Our main result on {\em decoherence} is summarized as follows: When the spin is coupled to the reservoir in a quantum coherent state or a quantum BE condensate state ($\varepsilon >0$) then the spin undergoes full decoherence, meaning that its off-diagonal density matrix ($\sigma_z$ basis) converges to zero in the limit of large times. However, when the same spin is coupled to the reservoir in the classical limit state ($\varepsilon =0$), then the spin only exhibits partial decoherence (nonzero asymptotic value of the off-diagonal density matrix element). The quantum nature of the reservoir thus enhances the decoherence of the spin. We also show that the spin shows full decoherence when coupled to the thermal reservoir, both in the classical and the quantum case.  This can be viewed as
a consequence of the classical limit being equivalent to a high temperature quantum
case.

\item[(b)] Our main results on {\em (non-)Markovianity} are given by numerical simulations. We find that the dynamics of the spin coupled to the condensate/coherent/thermal state is Markovian for small times, regardless of the value of the classicality parameter $\varepsilon$, and subsequently oscillates in time between Markovian and non-Markovian regimes. We find that strong coupling of the infra-red modes favours Markovianity. We also find
that a larger degree of quantumness (larger values of $\varepsilon$) in the reservoir favours non-Markovianity (thermal case).
\end{itemize}

\item[3.] {\em Spin dynamics for the energy exchange interaction.}\\
We show that our exact results for the quasi-classical spin dynamics in the energy-conserving model coincides with the expression derived in a more general theory of quasi-classical dynamics \cite{CFO}. The latter applies as well to the quasi-classical limit for interactions which do not conserve the spin energy ($G$ does not commute with $\sigma_z$). Our results for this case, which is not explicitly solvable, are as follows.
\begin{itemize}
\item[(a)] We show that the scattering operator for the system-reservoir dynamics in the quasi-classical limit exist, provided a certain `dispersiveness' condition on the reservoir dynamics is satisfied. The latter is encoded by the infra-red and ultra-violet behaviour of the interaction (form factor $g(k)$). As a consequence, the free spin dynamics, generated by the Hamiltonian $\tfrac12\omega_0 \sigma_z$, is stable in the sense that it deviates from the interacting dynamics ($\lambda\neq 0$) by a quantity of $O(\lambda)$ for all times $t\ge 0$. 

\item[(b)] While the energy-conserving model with the classical coherent state or the classical Bose-Einstein condensate state falls within the situation of the stable free spin dynamics (see point 2(a) above -- even though the analysis there holds for all values of the coupling $\lambda$), the thermal case does not. We present another family of models for which the free spin dynamics is not stable: The polaron-type models. For those the field dispersion relation is $\omega(k)=\omega_\r$, a constant in $k$, which leads to a violation of the dispersiveness condition mentioned above. We show that the effective spin dynamics is governed by a time-dependent Hamiltonian which is well known from the theory of atoms interacting with classical electric fields. 
\end{itemize}
\end{itemize}

\subsection{Quasi-classical regime.}
\label{features}
 We give a discussion of the features (QC1) and (QC2).
\smallskip

$\bullet$ {\it Discussion of {\rm (QC1)}.} This is a property of the reservoir alone. To motivate it, consider for the moment the reservoir without any coupling to the spin. Assume there are $N$ particles in the reservoir,  where $N$ is a fixed number. Pure states are symmetrized functions of $N$ variables, that is, normalized elements of 
$$
L^2_{\rm sym}(\rx^{3n},d^{3n}k) = \mathcal S\,  L^2(\rx^3,d^3k)\otimes\cdots\otimes L^2(\rx^3,d^3k),
$$
where $\mathcal S$ is the symmetrization operator. Let $A_1$ be a single particle operator acting on $L^2(\rx^3, d^3k)$ with integral kernel $A_1(k,l)$,
$$
(A_1f)(k)=\int_{\rx^3} A_1(k,l)f(l)d^3l.
$$
The associated one-body operator is defined by
\begin{eqnarray*}
\mathcal O(A_1) &=& \int_{\rx^3\times\rx^3}  A_1(k,l) a^*(k)a(l)\,d^3kd^3l\\
&=& A_1\otimes\bbbone\otimes\cdots\otimes\bbbone + \bbbone\otimes A_1\otimes\bbbone\cdots\otimes\bbbone +\cdots + \bbbone\otimes\cdots\bbbone\otimes A_1.
\end{eqnarray*}
This operator acts equally on each particle in an $N$-body system and it adds up the corresponding actions over all $N$ particles. It    preserves the symmetry of states. In {\em mean field theories}, one considers observables {\em averaged} over the number of particles. One-body mean field operators are defined to be given by $N^{-1}\mathcal O(A_1)$. In the same way, $n$-body operators ($1\le n\le N$) are of the form
$$
\mathcal O(A_n) = \int_{\rx^{3n}\times\rx^{3n}} A_n(k_1,\ldots,l_n)\, a^*(k_1)\cdots a^*(k_n)   a(l_1)\cdots a(l_n) d^3k_1\cdots d^3l_n,
$$
with a kernel $A_n(k_1,\ldots,l_n)$. The operator $\mathcal O(A_n)$ selects each of the  $N\choose n$ possible clusters of $n$ particles, acts on each cluster in the same way dictated by $A_n$, and then adds up all those actions. For large $N$ (and fixed $n$), we have 
$$
{N\choose n} \sim N^n
$$
and mean field $n$-body operators are defined to be of the form $N^{-n}\mathcal O(A_n)$. The mean field scaling is implemented by the replacement 
$$
a^*(k)\mapsto N^{-1/2} a^*(k),\quad a(k)\mapsto N^{-1/2} a(k).
$$
The expectation values of the rescaled creation and annihilation operators in $N$-body
states are of order $1$ as $N\rightarrow\infty$. 

The above is the description of the reservoir and its mean-field observables for a fixed number $N$ of particles. However, in an open system setting, the reservoir is coupled to a spin, and absorption/emission processes will change that number. Nevertheless, typically the interaction with the spin can alter the average number of particles in the bath by a finite amount only, and so during the dynamics, the number of excitations in the bath stays of the initial order $\varepsilon^{-1}$.  Let 
$$
\widehat N=\int_{\rx^3} a^*(k)a(k)d^3k
$$
be the number operator of the bath and consider a bath density matrix $\zeta_\varepsilon$, such that the average number of particles is given by
\begin{equation}
\label{12}
{\rm Tr}_\r\big(\zeta_\varepsilon \widehat N)\propto \varepsilon^{-1}.
\end{equation}
The mean-field scaling explained above amounts to the replacement
$$
a^*(k)\mapsto a_\varepsilon(k):= \sqrt\varepsilon a^*(k)
$$
(and analogously for $a(k)$) in the expression for reservoir observables, or analogously, to the scaling
\begin{eqnarray*}
a_\varepsilon^*(f):= a^*(\sqrt\varepsilon f), && a_\varepsilon(f):= a(\sqrt\varepsilon f)\\
\varphi_\varepsilon(f)&:=& \varphi(\sqrt\varepsilon f)\\   W_\varepsilon(f)&:=& W(\sqrt\varepsilon f):=e^{i\varphi_\varepsilon(f)}.
\end{eqnarray*}
For any fixed $0<\varepsilon\le1$, 
${\rm Tr}_\r (\zeta_\varepsilon \,\cdot\,)$ is a state (positive linear normalized functional) on the Weyl algebra $\mathcal W$, the $C^*$-algebra generated by all the operators $W(f)$, $f\in L^2(\rx^3,d^3k)$. This is the same as the $C^*$-algebra generated by all the $W_\varepsilon(f)$, of course. However, as we take $\varepsilon\rightarrow 0$, we will focus on the mean-field observables $W_\varepsilon(f)$ exclusively. Namely, we will consider the limit
$$
\chi_\varepsilon(f)={\rm Tr}_\r\big(\zeta_\varepsilon W_\varepsilon(f)\big),\quad \mbox{as $\varepsilon\rightarrow 0$.}
$$
The limiting characteristic functional 
$$
\chi_0(f)=\lim_{\varepsilon\rightarrow 0}\chi_\varepsilon(f)
$$
then defines a state on $\mathcal W$ which is obtained as a limit of averages of mean-field observables only. This restriction naturally imposes some macroscopic, classical features on the limit state associated to $\chi_0$. 
The emergence of classicality can be explained as follows. Let $A_\varepsilon, B_\varepsilon$ be reservoir observables, that is,  polynomials (and limits thereof) in creation and annihilation operators $a^*_\varepsilon(f)$, $a_\varepsilon(f)$. Then we have the commutative (classical) property 
$$
\lim_{\varepsilon\rightarrow 0} {\rm Tr}_\r\big(\zeta_\varepsilon A_\varepsilon B_\varepsilon\big) = \lim_{\varepsilon\rightarrow 0} {\rm Tr}_\r\big(\zeta_\varepsilon B_\varepsilon A_\varepsilon\big),
$$
which readily follows from $[A_\varepsilon,B_\varepsilon]=O(\varepsilon)$. Of course, $A_\varepsilon, B_\varepsilon\rightarrow 0$ in the limit $\varepsilon\rightarrow 0$, but the state $\zeta_\varepsilon$ is also scaled with $\varepsilon$ in order to have generally non-trivial (nonzero) expectation values as $\varepsilon\rightarrow 0$. 

We close the discussion of (QC1) by a comment on the dynamics of the reservoir. 
All reservoir observables, mean-field or not, evolve according to the Bogliubov transformation $f\mapsto e^{i\omega t}f$ implemented by the field Hamiltonian $$
H_\r=\int_{\rx^3}\omega(k)a^*(k)a(k)d^3k.
$$
The scaling $\varepsilon H_\r$ represents the mean-field energy observable, that is, the average energy per particle. However, as the generator of the dynamics, $H_\r$ is not scaled with $\varepsilon$. This is natural since the dynamics is generated by the full energy operator, not by
its mean-field counterpart, and also since this choice provides a non-trivial effective
dynamics for both the spin and the field on the same time scale of the order $1$ (that is,
$O(\varepsilon^0)$).
\medskip 

$\bullet$ {\it Discussion of {\rm (QC2)}.} The coupling constant $\lambda$ in the interaction term is independent of the parameter $\varepsilon$. The latter is linked to the initial state of the reservoir, but $\lambda$ is not. The quasi-classical limit $\varepsilon\rightarrow 0$ is {\em not} the same as a weak coupling limit $\lambda\rightarrow 0$, because the reservoir state scales singularly with $\varepsilon\rightarrow 0$, and the convergence $\sqrt\varepsilon\varphi(g)\rightarrow 0$ is compensated by the divergence of the state. As a consequence, the interaction felt by the spin is not small as $\varepsilon\rightarrow 0$. Rather, in this limit, the spin feels the interaction with a classical reservoir, whose (mean-field scaled) observables commute.
\medskip

In our model only the reservoir part is scaled to become classical as $\varepsilon\rightarrow 0$ while the spin  does not undergo such a scaling. We call this the {\em quasi-classical limit} \cite{CFO,CFO19,CF18,CFO23,CFM23} (as opposed to the semi-classical one, in which the system would become classical in the limit as well). The coupling of a system of quantum particles in interaction with a field with a scaling regime analogous to ours was considered first, to the best of our knowledge, in \cite{GNV06}. Further works using state-valued Wigner measures are \cite{CF18, CFO19, CCFO21, CFO, CFM23, CFO23}, while in \cite{amour2016anarxiv,amour2016amrex,amour2017jmp,amour2017arxiv,amour2016alnarxiv} and references therein a different approach is taken, using infinite-dimensional Weyl calculus.

\section{Classical limit of the quantum reservoir}

In this section we consider the reservoir alone. We describe its Hilbert space and introduce the characteristic function of reservoir states in Section \ref{secCF}. We then derive the explicit characteristic functions for coherent, condensate and thermal states, Section \ref{sec:classlim}. 

The reservoir Hilbert space is the bosonic Fock space 
$$
\mathcal F(\h) := \bigoplus_{n\ge 0} \big[\h^{\otimes n}\big]_{\rm sym}  = \cx\oplus \h \oplus [\h\otimes\h]_{\rm sym}\oplus [\h\otimes\h\otimes\h]_{\rm sym}\oplus\cdots
$$
with the single-particle space (in momentum representation)
$$
\h = L^2(\rx^3,d^3k).
$$
The symbol $[\cdot]_{\rm sym}$ means that we take the symmetric (permutation invariant) subspace. The $n$-fold symmetric product $[\h\otimes\cdots\otimes\h]_{\rm sym}$ is called the $n$-particle sector. The creation, annihilation, field and Weyl opeators are denoted by $a^*(f)$, $a(f)$, $\varphi(f)$ (as above) and \begin{equation}
\label{Weylop}
W(f)=e^{i\varphi(f)}.
\end{equation}
They satisfy the canonical commutation relations (equivalently expresseed in three ways),
\begin{equation}
\label{CCR}
[a(f),a^*(g)] = \langle f,g\rangle,\quad [\varphi(f),\varphi(g)] =i{\rm Im} \langle f,g\rangle,\quad 
 W(f) W(g) = e^{-\frac{i}{2}{\rm Im}\langle f,g\rangle} W(f+g),
\end{equation}
where $\langle \cdot\, , \cdot\rangle$ is the inner product of $\h$.

\subsection{Characteristic function of the reservoir}
\label{secCF}

The {\em characteristic function} of a reservoir state $\zeta_\varepsilon$ (a density matrix on $\mathcal F(\h)$) is defined to be 
\begin{equation}
\label{defcharfn}
\chi_\varepsilon(f) ={\rm Tr}_\r(\zeta_\varepsilon W_\varepsilon(f)), \qquad f\in\h.
\end{equation}  
We consider reservoir states $\zeta_\varepsilon$ for which the following two assumptions hold:
\begin{itemize}
\item[1.] The limit $\varepsilon\rightarrow 0$ of the characteristic function $\chi_\varepsilon(f)$ exists for every $f\in\h$ and we denote it by $\chi_0$, 
\begin{equation}
\chi_0(f) \equiv \lim_{\varepsilon\rightarrow 0}\chi_\varepsilon(f).
\end{equation}

\item[2.] The nonlinear functional  $f\mapsto\chi_0(f)$ is continuous (as a map from $\h$ to $\mathbb C$).
\end{itemize}
\bigskip

\noindent
The functional $\chi_0$ is of positive type, meaning that for any $N\in\mathbb N$, any $z_k\in\cx$, $f_k\in\h$, $k=1,\ldots,N$, we have 
\begin{equation}
\label{21.1}
\sum_{k,l}  \overline{z_k}z_l \, \chi_0 (f_k-f_l)\ge 0.
\end{equation}
To see that \eqref{21.1} holds, we note that for any $\varepsilon>0$,  $\tr_\r(\zeta_\varepsilon \cdot)$ is a state on the CCR algebra generated by the Weyl operators $W(f)$, $f\in\h$, so we have (see for instance \cite{MMLecNotes, Petz})
$$
\sum_{k,l}\overline{z_k}z_l \, \tr_\r \big(\zeta_\varepsilon W(f_k-f_l)\big) e^{\frac i2  {\rm Im}\langle f_k,f_l\rangle} \ge 0,
$$
which implies that (take $\sqrt\varepsilon f_k$ instead of $f_k$)
$$
\sum_{k,l} \overline{z_k} z_l\, \tr_\r \big(\zeta_\varepsilon W_\varepsilon(f_k-f_l)\big) e^{\varepsilon \frac i2  {\rm Im}\langle f_k,f_l\rangle} \ge 0.
$$
Taking $\varepsilon\rightarrow 0$ yields \eqref{21.1}. Furthermore, we have $\chi_\varepsilon(0)=1$ for all $\varepsilon>0$  and so 
\begin{equation}
\label{21.2}
\chi_0(0)=1.
\end{equation} 
The continuity in $f$ and the properties \eqref{21.1} and \eqref{21.2} imply, by the Bochner-Minlos theorem \cite{Hida,Fa}, that $\chi_0$ is the Fourier transform of a {\em cylindrical} probability measure $\mu$ on $\h$,\footnote{A cylindrical measure or premeasure, which in general is not a measure, is a finitely additive set function defined on so-called cylinder sets. We refrain from giving technical definitions here and refer to  \cite{cylmeas, skorohod, Bogachev} for detail. The Bochner-Minlos theorem is often stated for real Hilbert spaces. The complex Hilbert space $\h$ with orthonormal basis $\{e_n\}_{n\ge 0}$ and inner product $\langle\cdot,\cdot\rangle$ is a real Hilbert space with orthonormal basis $\{e_n, ie_n\}_{n\ge 0}$ and inner product ${\rm Re}\langle\cdot,\cdot\rangle$. 
}
\begin{equation}
\label{16}
\chi_0(f)=\int_\h d\mu(g) e^{i\sqrt 2 {\rm Re}\langle g,f\rangle}.
\end{equation}
Cylindrical (Wigner) measures were first used to study bosonic mean-field theory  in \cite{AmNi1} and subsequently in  \cite{AmNi2,AmNi3,AmNi4} for a very general setting. It was shown that if the state $\zeta_\varepsilon$ satisfies the additional assumption that for some $\delta>0$ and some $C_\delta<\infty$ we have 
\begin{equation}
\label{A1}
{\rm Tr}_\r\big(\zeta_\varepsilon (\varepsilon\widehat N +1)^\delta\big) <C_\delta,
\end{equation}
then the measure $\mu$ in \eqref{16} is a (true, that is $\sigma$-additive) probability measure on $\h$. The condition \eqref{A1} is satisfied for coherent and condensate states we consider in this work and is compatible with the setting described above, {\em i.e.}, an initial state of the field containing an average number of excitations of the order $ O(\varepsilon^{-1}) $. Mathematically speaking, \eqref{A1} ensures that the state essentially concentrates on the sectors of the (unscaled) Fock space with finitely many excitations | that is, its tails on sectors with very many excitations decay fast enough. This property guarantees that the associated cylindrical measure is what is called {\it tight} and, as such, it identifies a proper measure on the Fock space.

The inclusion of the factor $\sqrt 2$ in the phase of \eqref{16} is a convention (that factor can be chosen to be any nonzero real number; its specific choice determines the probability measure $\mu$). As we will see below in Section \ref{sec:cs}, with the choice $\sqrt 2$ in the exponent, the measure $\mu$ in \eqref{16} has a direct interpretation for coherent reservoir states.

\subsection{Explicit classical limits of some reservoir states}
\label{sec:classlim}

\subsubsection{Coherent states} 
\label{sec:cs}

Coherent states are indexed by $f\in \h$, defined  as 
\begin{equation}
\label{coherentstate}
\Psi_f := e^{a^*(f)-a(f)}\Omega = e^{i\varphi(-\sqrt 2 i f)}\Omega = W(-\sqrt2 i f)\Omega,
\end{equation}
where $\Omega$ is the vacuum state and where we used the definitions \eqref{fieldop} and \eqref{Weylop} to arrive at the second and third equality, respectively. We consider reservoir density matrices of the form
\begin{equation}
\label{m13}
\zeta = \int_\h d\mu_0(f) |\Psi_f\rangle\langle\Psi_f|,
\end{equation}
where $d\mu_0$ is a probability measure on $\h$. This state is generally a mixture of the pure coherent states $|\Psi_f\rangle\langle\Psi_f|$. For $d\mu_0(f) = \delta_{f,f_0}$ (Dirac measure centered at a fixed $f_0\in\h$) we have $\zeta=|\Psi_{f_0}\rangle\langle \Psi_{f_0}|$. The average number of particles in the state $\zeta$ is 
$$
{\rm tr}_\r[\zeta \widehat N] = \int_\h d\mu_0(f) \langle \Psi_f, \widehat N \Psi_f\rangle  = 2 \int_\h d\mu_0(f) \|f\|^2.
$$
In accordance with the classical scaling we set for $\varepsilon>0$,
\begin{equation}
\label{m13.1}
\zeta_\varepsilon = \int_\h d\mu_0(f) |\Psi_{f/\sqrt\varepsilon }\rangle\langle\Psi_{f/\sqrt \varepsilon }|,
\end{equation}
so that $\tr_\r[\zeta_\varepsilon \widehat N]\propto 1/\varepsilon$. A direct calculation based on the Weyl CCR \eqref{CCR}, the definition \eqref{coherentstate} and the fact that $\langle \Omega, W(f)\Omega\rangle =e^{-\tfrac14 \|f\|^2}$ yields
\begin{equation}
\label{chicoherent}
\chi_\varepsilon(f) = \int_{\h}d\mu_0(g)\,  {\rm tr}_\r\big[|\Psi_{g/\sqrt\varepsilon }\rangle\langle\Psi_{g/\sqrt \varepsilon }| W_\varepsilon(f)\big] =   e^{-\frac14\varepsilon\|f\|^2}\ \int_\h d\mu_0(g)\,  e^{i \sqrt2  {\rm Re}\langle g,f\rangle}.
\end{equation}
Thus
\begin{equation}
\label{chinotcoherent}
\chi_0(f) = \lim_{\varepsilon\rightarrow 0} \chi_\varepsilon(f) = \int_\h d\mu_0(g)\,  e^{i \sqrt2  {\rm Re}\langle g,f\rangle}
\end{equation}
and so by comparing with \eqref{16} we see that the measure resulting from the Bochner-Minlos theorem is simply $d\mu_0$.

\subsubsection{Bose-Einstein condensate}
\label{sect:BEC}

Let $f_0\in \h$, $\|f_0\|=1$, be a single particle wave function and consider the $n$-particle state
\begin{equation}
\label{condpurestate}
\psi_\varepsilon  = \frac{a^*(f_0)^n}{\sqrt{n!}}\Omega, \quad n=\lfloor 1/\varepsilon \rfloor
\end{equation}
where $\Omega$ is the vacuum state and $n$ is the largest integer $\le 1/\varepsilon$. This scaling ensures that the number of particles in the state $\psi_\varepsilon$ is $n\sim 1/\varepsilon$, in compliance with  \eqref{12}. The characteristic functional 
\begin{equation}
\label{74.0}
\chi_\varepsilon(f) = \langle\psi_\varepsilon,  W_\varepsilon(f)\psi_\varepsilon\rangle, \qquad f\in \h
\end{equation}
can be calculated explicitly,
\begin{equation}
\label{74}
\chi_\varepsilon(f) = L_n\big(\tfrac12 \varepsilon|\langle f_0,f\rangle|^2\big)\, \langle \Omega, W_\varepsilon(f)\Omega\rangle, \qquad n=\lfloor 1/\varepsilon\rfloor,
\end{equation}
where
$$
L_n(x) = \frac{1}{n!}\sum_{k=0}^n {n\choose k} \frac{n!}{(n-k)!}( -x)^{n-k} \ 
$$
is the $n$th Laguerre polynomial. 
The expression \eqref{74} is derived in \cite{MMLecNotes} and appeared in the work of Araki and Woods \cite{AW}. In the setup of \cite{AW,MMLecNotes} one is interested in the thermodynamic (infinite volume) limit while in the present setting, we take the classical limit. 
The two cases are formally similar but there is one crucial difference: The factor $\langle \Omega, W_\varepsilon(f)\Omega\rangle$ in \eqref{74} converges to $1$ as $\varepsilon\rightarrow 0$ while that factor remains present in the infinite volume limit. Due to the disappearence of the factor, the represented algebra of observables is commutative in the classical limit (see below), while it is not commutative in the infinite volume limit. As derived in \cite{AW,MMLecNotes}, we have
\begin{equation}
\label{76}
\lim_{\varepsilon\rightarrow 0} \chi_\varepsilon(f) = J_0\big(\sqrt{2}|\langle f_0,f\rangle| \big) = \int_{-\pi}^\pi  \frac{d\theta}{2\pi} \ e^{i\sqrt 2\, {\rm Re}\,  e^{i\theta} \langle f_0,f\rangle}, 
\end{equation}
where $J_0$ denotes the Bessel function and the second equality is a well known integral representation of $J_0$. According to \eqref{16}, the measure $d\mu$ associated to the quasiclassical reservoir state is supported on the states $e^{i\theta}f_0$ with $\theta$ drawn uniformly from $S^1$.

To recover a Hilbert space formalism, we may cast \eqref{76} in the form
$$
\lim_{\varepsilon\rightarrow 0}\langle\psi_\varepsilon, W_\varepsilon(f)\psi_\varepsilon\rangle = \langle \Omega_{\rm cl}, W^{\rm cl}(f)\Omega_{\rm cl}\rangle,
$$
where the Hilbert space on the right side is $L^2(S^1,d\theta/2\pi)$, $\Omega_{\rm cl}=1$ is the constant function and the `classical Weyl operator' is the operator of multiplication by the function $W^{\rm cl}(f) = e^{i\sqrt2\, {\rm Re} \, e^{i\theta}\langle f_0,f\rangle}$ on $L^2(S^1,d\theta/2\pi)$. 

The classical Weyl operators {\em commute} (they generate the bounded multiplication operators on  $L^2(S^1,d\theta/2\pi)$). They do not satisfy the usual canonical commutation relations. This results because with the scaling $W_\varepsilon(f)$, we took the limit $\varepsilon\rightarrow 0$ of the elements in the algebra of observables (not only of the state as it is the case, for example, in the infinite volume limit). The classical field and creation operators are operators of multiplication by
$$
\varphi^{\rm cl}(f) = \sqrt2 {\rm Re}\, e^{i\theta} \langle f_0,f\rangle, \quad (a^*)^{\rm cl}(f) = e^{i\theta}\langle f_0,f\rangle.
$$

The expression \eqref{76}  was  derived by a direct computation in \cite[proposition 4.1]{AmNi1}, without linking it to the classical representation of the Weyl algebra on a Hilbert space. 

\subsubsection{Link between coherent and Bose-Einstein states}
\label{subsub:link}
The relations \eqref{chinotcoherent} and \eqref{76} show that in the classical limit, the Bose-Einstein condensate populated by particles with wave function $f_0\in\h$ is the same reservoir state as the one obtained by mixing the family of coherent states  $\Psi_{e^{-i\theta} f_0}$, \eqref{coherentstate}, uniformly over $\theta\in[-\pi,\pi]$.

\subsubsection{Thermal state}
\label{sec:thermal}

 Without the classical scaling, the thermal state is characterized by the two-point function 
\begin{equation}
\label{m20}
\langle a^*(f)a(g)\rangle_\beta = \int_{\rx^3} \frac{\overline g(k) f(k)}{e^{\beta\omega(k)}-1}d^3k.
\end{equation}
Here, $\beta$ is the inverse temperature and the free field Hamiltonian is $\d\Gamma(\omega)$. The symbol $\langle \cdot \rangle_\beta$ denotes the average in the thermal state. In the classical limiting procedure, the observables are built from $a_\varepsilon(f)=\sqrt\varepsilon a(f)$ and $a^*_\varepsilon(f) =\sqrt{\varepsilon}a^*(f)$, so \eqref{m20} becomes
\begin{equation}
\label{m21}
\langle a_\varepsilon^*(f)a_\varepsilon(g)\rangle_{\beta,\varepsilon} = \int_{\rx^3} \frac{\varepsilon} {e^{\beta\omega(k)}-1}\overline g(k) f(k)d^3k.
\end{equation}
In order to obtain a non-trivial  limit for $\varepsilon\rightarrow 0$ we scale the inverse temperature $\beta$ with $\varepsilon$ as 
$$
\beta(\varepsilon)=\varepsilon \beta',
$$ 
for some fixed $\beta'>0$. This amounts to taking a high-temperature limit in the state, simultaneously with the classical scaling of the observables. The classical thermal state, obtained from \eqref{m21} by taking $\varepsilon\rightarrow 0$, has the two-point function
\begin{equation}
\lim_{\varepsilon\rightarrow 0} \langle a_\varepsilon^*(f)a_\varepsilon(g)\rangle_{\beta'\varepsilon ,\varepsilon} =\langle g,\tfrac{1}{\beta'\omega}f\rangle.
\end{equation}
This is the lowest order term in the high temperature expansion (about $\beta=\beta'= 0)$ of the unscaled (quantum, $\varepsilon=1$) case \eqref{m20}. Taking the limit $\varepsilon\rightarrow 0$ of the characteristic function 
\begin{equation}
\label{chithermal}
\chi_\varepsilon(f) = \langle W_\varepsilon(f)\rangle_{\beta'\varepsilon,\varepsilon} = e^{-\frac14\varepsilon \langle f, \coth(\beta'\varepsilon\omega/2) f\rangle}
\end{equation}
yields
\begin{equation}
\chi_0(f) = e^{-\frac{1}{2\beta'}\langle f, \omega^{-1}f\rangle}.
\end{equation}
The associated measure $d\mu$ (cf \eqref{16}) is the centered Gaussian with covariance operator $\beta'\omega$.

\section{Open spin-reservoir complex}

We now consider a spin coupled to a reservoir. In the quasi-classical theory which we analyze in the present work, the reservoir (Bose field) is considered to be a classical system ($\varepsilon\rightarrow 0$ as above), while the system (spin) stays a quantum object. This is in contrast to the `semi-classical limit', where both the reservoir and the spin become classical systems. The pure state space of the spin is \begin{equation}
\H_\s=\cx^2.
\end{equation}
System observables are selfadjoint operators on $\H_\s$, such as the Pauli matrices $\sigma_x, \sigma_y, \sigma_z$. The spin dynamics is generated by the Hamiltonian
\begin{equation}
\label{HS}
H_\s = \tfrac12\omega_0 \sigma_z,\qquad 
\end{equation}
where $\omega_0>0$.

The full system-reservoir Hamiltonian acting on $\H_\s\otimes\mathcal F(\h)$ is
\begin{eqnarray}
\label{Hamilt}
H&=& H_0+ \lambda G\otimes \varphi(g)\\
H_0&=&H_\s\otimes\bbbone_\r+\bbbone_\s\otimes H_\r = \tfrac12\omega_0 \sigma_z +d\Gamma(\omega),
\end{eqnarray}
where we leave out obvious factors $\otimes\bbbone_\r$ etc. The parameter $\lambda\in\rx$ in \eqref{Hamilt} is the coupling constant, $G=G^*$ is a matrix on $\H_\s$ and $g\in\h$ is a fixed function, called the form factor. For $\lambda=0$ the Hamiltonian $H$ reduces to the uncoupled $H_0$.

In the quasi-classical scaling, the interacting Hamiltonian is
\begin{equation}
\label{Heps}
H(\varepsilon) = H_0+\lambda G\otimes\varphi_\varepsilon(g). 
\end{equation}
For $\varepsilon=1$ the model is fully quantum and for $\varepsilon\rightarrow 0$ we get the quasi-classical model.
\medskip

A standard way of ensuring that $H$, \eqref{Hamilt} is a well defined (selfadoint) operator is to assume that $g\in L^2(\mathbb R^3,d^3k)$. (For the thermal case, we additionally assume that $\omega^{-1/2}g\in L^2(\mathbb R^3,d^3k)$ so that \eqref{m20} is well defined.) However, it is possible to make sense of $H$ in some cases even if $g$ is not square integrable, because $g$ might have  singular infra-red or ultra-violet behaviour. This has been explored rigorously in the recent works \cite{Takaesu2010, Lonigro22, Lonigro23, BCFF24} (UV singular models), where many further references on this topic can be found. To our knowledge, the infrared singular problem has been studied mostly  concerning the existence of the ground and scattering states, \cite{HHS24, BDG24}. As we have explained in result (2b) above, the coupling to the infra-red modes has an effect on the system's Markovianity and it would be interesting to study this effect for singular coupling functions $g$. We have not addressed this question so far.

\subsection{Reduced dynamics of the spin}
\label{sect:qcCFO}

A rigorous theory for the reduced dynamics of a system coupled to the reservoir in the classical limit has been carried out in \cite{CFO}. The treatment includes classes of infinite-dimensional systems as well as  correlated initial system-reservoir states. We apply the results of \cite{CFO} to the relatively simple spin-Boson model in the quasi-classical limit, and for initially uncorrelated system-reservoir states.

Take an initial product state
\begin{equation}
\label{m2}
\Gamma_\varepsilon = \gamma\otimes \zeta_\varepsilon,
\end{equation}
where $\gamma$ is a density matrix on $\cx^2$ and $\zeta_\varepsilon$ is a density matrix on Fock space $\mathcal F(\h)$ having the properties presented in \ref{secCF}. The reduced system density matrix at time $t$ is given by
\begin{equation}
\gamma_\varepsilon(t) = \tr_\r \big[ e^{-i t H(\varepsilon)}(\gamma\otimes \zeta_\varepsilon)e^{i t H(\varepsilon)}\big].
\end{equation}
The following result is a consequence of a more general analysis of quasi-classical dynamics.

\begin{thm}[Correggi-Falconi-Olivieri \cite{CFO}] 
\label{thm:CFO}
For all $t\in\rx$, we have 
\begin{equation}
\label{m18}
\lim_{\varepsilon\rightarrow 0}\gamma_\varepsilon(t) = \gamma_0(t) := 
\int_\h d\mu(f) U_t(f)\gamma U_t(f)^*,
\end{equation}
where $d\mu$ is the probability measure on $\h$ arising from the limit characteristic function $\chi_0$, as given in \eqref{16}. Here,  $U_t(f)\equiv U_{t,0}(f)$ where $U_{t,s}(f)$ is the unitary two-parameter group acting on  $\H_\s=\cx^2$, solving the Schr\"odinger equation with a time-dependent Hamiltonian,
\begin{equation}
\label{m21.1}
i \partial_t U_{t,s}(f) = \big(H_\s+ V(e^{-it\omega} f)\big) U_{t,s}(f),\qquad U_{t,t}(f)=\bbbone,
\end{equation}
with
\begin{equation}
\label{m21.2}
V(f) = \sqrt 2\lambda  
G \,{\rm Re} \langle  f,g\rangle,
\end{equation}
and where $G$ and $g$ are the ingredients of the interaction operator \eqref{Hamilt}. 
\end{thm}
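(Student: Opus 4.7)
The plan is to reduce the theorem to the case where $\zeta_\varepsilon$ is a coherent state via the Bochner–Minlos representation of $\chi_0$, and then to handle the coherent-state case by an explicit time-dependent Bogoliubov transformation.

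For $\zeta_\varepsilon = |\Psi_{f/\sqrt\varepsilon}\rangle\langle\Psi_{f/\sqrt\varepsilon}|$ with $f\in\h$, I would conjugate the evolution by the time-dependent Weyl operator $W_t := W(-\sqrt 2\,\i\, e^{-\i t\omega}f/\sqrt\varepsilon)$, chosen to track the free dispersion of the coherent-state parameter. Setting $\tilde\psi(t) := W_t^*\, e^{-\i t H(\varepsilon)}(\phi\otimes\Psi_{f/\sqrt\varepsilon})$ for a pure initial spin state $\gamma=|\phi\rangle\langle\phi|$ (the general case follows by linearity), one has $\tilde\psi(0)=\phi\otimes\Omega$ together with $\i\partial_t\tilde\psi = \tilde H(\varepsilon,t)\tilde\psi$ for $\tilde H(\varepsilon,t) = W_t^* H(\varepsilon) W_t + \i(\partial_t W_t^*)W_t$. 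A direct computation based on the CCR \eqref{CCR} shows that the $O(\varepsilon^{-1/2})$ linear field terms produced by conjugating $d\Gamma(\omega)$ are cancelled exactly by the phase correction $\i(\partial_t W_t^*)W_t$; that the $O(\varepsilon^{-1})$ square term collapses to a time-independent global phase (because $\|\omega^{1/2}e^{-\i t\omega}f\|=\|\omega^{1/2}f\|$); and that conjugating $\lambda\sqrt\varepsilon\, G\otimes\varphi(g)$ produces exactly the classical driving $V(e^{-\i t\omega}f)$ of \eqref{m21.2} on the spin, plus the original coupling. One is thus led to
\[
\tilde H(\varepsilon,t) = H_\s + V(e^{-\i t\omega}f) + d\Gamma(\omega) + \lambda\sqrt\varepsilon\, G\otimes\varphi(g) + c(\varepsilon,t),
\]
with $c(\varepsilon,t)$ a scalar.

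Since the transformed initial datum is the vacuum on the reservoir side, the residual coupling $\lambda\sqrt\varepsilon\, G\otimes\varphi(g)$ can be controlled using $\|\varphi(g)\Omega\|=\|g\|/\sqrt 2$ together with a Gronwall estimate on the propagated number expectation. In the limit $\varepsilon\to 0$ the dynamics therefore factorises as $U_{t,0}(f)\otimes e^{-\i t d\Gamma(\omega)}$; since the vacuum is invariant under the free field evolution, tracing out the reservoir (and undoing the Weyl conjugation) yields $U_t(f)\gamma U_t(f)^*$. The case of a coherent mixture $\zeta_\varepsilon = \int d\mu_0(f)\,|\Psi_{f/\sqrt\varepsilon}\rangle\langle\Psi_{f/\sqrt\varepsilon}|$ follows by linearity of the partial trace, and \eqref{chinotcoherent} identifies $\mu=\mu_0$. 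For a generic $\zeta_\varepsilon$ satisfying the assumptions of \cref{secCF} and \eqref{A1}, one represents $\zeta_\varepsilon$ through its (rescaled) Husimi transform, which is a genuine convex combination of coherent states; the regularity \eqref{A1} delivers tightness, the hypothesis $\chi_\varepsilon\to\chi_0$ promotes it to weak convergence of the Husimi measures to the Bochner–Minlos measure $\mu$ of \eqref{16}, and the continuity of $f\mapsto U_t(f)\gamma U_t(f)^*$ (a consequence of $\|V(f)\|\le\sqrt 2|\lambda|\,\|G\|\,\|g\|\,\|f\|$ and a standard Duhamel bound) permits the passage to the limit.

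The main obstacle, beyond the bookkeeping of the displacement argument, is controlling the residual $O(\sqrt\varepsilon)$ spin-field coupling uniformly on a compact time interval and uniformly in $f$ drawn from a $\mu$-typical set: since $\varphi(g)$ is unbounded, no operator-norm bound is available and one must propagate the number expectation along the modified evolution. The other delicate step is lifting the weak convergence of Husimi measures to the convergence of the integrated density matrix $\gamma_\varepsilon(t)$; this is precisely where the abstract quasi-classical machinery of \cite{CFO} (Wigner-measure formalism) does the heavy lifting.
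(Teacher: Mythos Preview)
The paper does not prove \cref{thm:CFO} at all: it is quoted as a direct consequence of the general quasi-classical dynamics theory established in \cite{CFO}, and the paper simply specialises the statement to the spin-boson setting with factorised initial data. There is thus no ``paper's own proof'' to compare against; your sketch in fact goes further than the present paper does.

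That said, your outline is a faithful caricature of the strategy in \cite{CFO}: the displacement (time-dependent Weyl conjugation) that reduces the coherent-state problem to a vacuum problem with classical driving is standard and your identification of $\tilde H(\varepsilon,t)$ is correct; the Gronwall/number-operator control of the residual $\sqrt\varepsilon$ coupling is indeed the mechanism by which the coherent-state case closes. You are also right that the passage from coherent states to generic $\zeta_\varepsilon$ is the nontrivial part, and that it is precisely here that the Wigner-measure formalism of \cite{CFO} is invoked; your Husimi-plus-tightness description is morally correct but is not a self-contained argument (nor does the present paper attempt one). In short: your proposal is a sound roadmap, the paper offers none, and both ultimately defer the hard step to \cite{CFO}.
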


{\em Notational convention:} In \cite{CFO} the field operator $\varphi(g)$ is defined as $\sqrt 2$ times our field operator \eqref{fieldop}, which amounts to replacing our form factor $g$ in \eqref{Heps} by $g/\sqrt 2$ in order to obtain the formulas in \cite{CFO}.

\section{The energy conserving model}
\label{sec:econmodel}

The case where the interaction operator $G$ in \eqref{Hamilt} commutes with the spin Hamiltonian $H_\s$ \eqref{HS} is called energy conserving. In the current section, we first derive the exact spin dynamics in Proposition \ref{prop1}. In Section \ref{sec:deco} we derive from it the decoherence properties, which are summarized in Corollary \ref{cor:2}. We study in Section \ref{sec:nm} the non-Markovianity of the spin dynamics. Section \ref{sec:bench} is devoted to showing that the exact expressions coincide with the results from \cite{CFO} in the quasi-classical limit.

Consider the energy conserving Hamiltonian
\begin{equation}
\label{m1}
H(\varepsilon)= \tfrac12 \omega_0 \sigma_z+\d\Gamma(\omega) + \tfrac12 \lambda \sigma_z\otimes \varphi_\varepsilon(g).
\end{equation}
The system-reservoir dynamics generated by \eqref{m1} is explicitly solvable. We denote the matrix elements of any operator $S$ on $\cx^2$ in the energy basis -- the eigenbasis of $\sigma_z$ -- by $S_{ij}\equiv [S]_{ij}$, where $\sigma_z|1\rangle=|1\rangle$, $\sigma_z|2\rangle=-|2\rangle$. Starting in an initial product state of the form \eqref{m2}, the reduced system density matrix at time $t$,
 \begin{equation}
\gamma_\varepsilon(t) = \tr_\r \big[ e^{-i t H(\varepsilon)}(\gamma\otimes \zeta_\varepsilon)e^{i t H(\varepsilon)}\big],
\end{equation}
satisfies the following.
\begin{prop}[Explicit expression of the dynamics]
\label{prop1}
Let $\varepsilon>0$. Then for all $t\ge 0$, the populations of the spin are constant,  $[\gamma_\varepsilon(t)]_{ii} = \gamma_{ii}$, $i=1,2$ and the off-diagonal evolves as 
\begin{equation}
[\gamma_\varepsilon(t)]_{12} = e^{-i \omega_0 t} D_\varepsilon(t) \gamma_{12},
\end{equation}
where the decoherence function is given by
\begin{equation}
\label{m12}
D_\varepsilon(t)  = \chi_\varepsilon\big(\lambda g_t\big),\qquad g_t(k):= \frac{1-e^{i\omega t}}{i\omega}g(k),
\end{equation}
with $\chi_\varepsilon$ the characteristic function \eqref{defcharfn}.
\end{prop}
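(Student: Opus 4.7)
Because $\sigma_z$ appears in every operator factor of $H(\varepsilon)$ in \eqref{m1}, the full Hamiltonian is block diagonal in the $\sigma_z$ eigenbasis $\{\ket{1},\ket{2}\}$ with $\sigma_z\ket{i}=s_i\ket{i}$, $s_1=+1$, $s_2=-1$. Writing $H(\varepsilon)=\sum_i \ket{i}\bra{i}\otimes H_i$ with
\begin{equation*}
H_i := \tfrac12 s_i\omega_0 + \d\Gamma(\omega) + \tfrac12\lambda s_i\varphi_\varepsilon(g),
\end{equation*}
the partial trace defining $\gamma_\varepsilon(t)$ factorises to
$
[\gamma_\varepsilon(t)]_{ij} = \gamma_{ij}\,\tr_\r\big[\zeta_\varepsilon\, e^{itH_j}e^{-itH_i}\big].
$
For $i=j$ the two exponentials cancel, giving $[\gamma_\varepsilon(t)]_{ii}=\gamma_{ii}$, so only the off-diagonal case $(i,j)=(1,2)$ remains.

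I would next solve the dynamics generated by each $H_i$, which is that of a displaced harmonic oscillator. Passing to the interaction picture with respect to $H_\r=\d\Gamma(\omega)$, i.e.\ setting $e^{-itH_i} = e^{-is_i t\omega_0/2}\,e^{-itH_\r}\,\widetilde U_i(t)$, and using the Bogoliubov identity $e^{itH_\r}\varphi(g)e^{-itH_\r}=\varphi(e^{it\omega}g)$, the residual unitary satisfies
\begin{equation*}
i\partial_t\widetilde U_i(t) = \tfrac12\lambda s_i\,\varphi_\varepsilon(e^{it\omega}g)\,\widetilde U_i(t),\qquad \widetilde U_i(0)=\bbbone.
\end{equation*}
Since the field commutators $\bigl[\varphi_\varepsilon(e^{is\omega}g),\varphi_\varepsilon(e^{iu\omega}g)\bigr]=i\varepsilon\,{\rm Im}\langle e^{is\omega}g,e^{iu\omega}g\rangle$ are $c$-numbers, the time-ordered exponential collapses to an ordinary exponential times a scalar phase (the Magnus expansion terminating at second order). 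The first-order term integrates using the real-linearity of $\varphi_\varepsilon$ and the elementary identity $\int_0^t e^{is\omega}g\,\d s = -g_t$, giving $-i\,\tfrac12\lambda s_i\int_0^t\varphi_\varepsilon(e^{is\omega}g)\,\d s = i\varphi_\varepsilon\bigl(\tfrac{\lambda s_i}{2}g_t\bigr)$; the second-order term is a real phase $\phi(t)$ that depends on $s_i$ only through $s_i^2=1$, hence is the same for $i=1,2$. Consequently
\begin{equation*}
e^{-itH_i} = e^{-is_i t\omega_0/2}\,e^{-itH_\r}\,e^{i\phi(t)}\,W_\varepsilon\bigl(\tfrac{\lambda s_i}{2}g_t\bigr).
\end{equation*}

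Finally I would form the product $e^{itH_2}e^{-itH_1}$. The factors $e^{\pm i\phi(t)}$ and the free reservoir propagators $e^{\pm itH_\r}$ cancel, the spin phases combine to $e^{-i\omega_0 t}$, and the two Weyl operators multiply via $W_\varepsilon(f)W_\varepsilon(f)=W_\varepsilon(2f)$ (using ${\rm Im}\langle f,f\rangle=0$), so that
\begin{equation*}
e^{itH_2}e^{-itH_1} = e^{-i\omega_0 t}\,W_\varepsilon(\lambda g_t).
\end{equation*}
Taking the partial trace against $\zeta_\varepsilon$ and recognising the characteristic function \eqref{defcharfn} yields $[\gamma_\varepsilon(t)]_{12}=e^{-i\omega_0 t}\chi_\varepsilon(\lambda g_t)\gamma_{12}$, identifying $D_\varepsilon(t)=\chi_\varepsilon(\lambda g_t)$. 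The computation is essentially routine; the only delicate steps are (i) the real-linearity of $\varphi_\varepsilon$ when pulling complex-valued time integrals into its argument, and (ii) verifying that the Magnus scalar phase $\phi(t)$ is the same for both $s_i=\pm1$ so that it cancels in $e^{itH_2}e^{-itH_1}$. Both points are immediate once the $c$-number nature of the field commutators is exploited, so no serious analytic obstacle is expected.
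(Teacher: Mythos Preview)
Your proof is correct. The route differs from the paper's only in how the displaced-oscillator propagators $e^{-itH_i}$ are computed. You pass to the interaction picture with respect to $H_\r$ and exploit that the time-dependent generator $\tfrac{\lambda s_i}{2}\varphi_\varepsilon(e^{it\omega}g)$ has $c$-number commutators at different times, so the Magnus series terminates at second order and yields a Weyl operator times a scalar phase. The paper instead uses the \emph{polaron transformation}: it conjugates $\d\Gamma(\omega)\pm\tfrac{\lambda}{2}\varphi_\varepsilon(g)$ by the fixed Weyl operator $W_\varepsilon(f)$ with $f=\tfrac{\lambda}{2i\omega}g$, which removes the linear field term and leaves $e^{-it\d\Gamma(\omega)}$; the three-Weyl product is then collapsed using the CCR. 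Both exploit the same algebraic fact (displaced harmonic oscillators are exactly solvable), so there is nothing essentially new in either direction. Your approach has the minor advantage that the cancellation of the scalar phase $\phi(t)$ is transparent (it depends on $s_i$ only through $s_i^2=1$), whereas the polaron approach requires tracking two such phases through the conjugation identities. The polaron route, on the other hand, avoids time-ordering altogether and never needs the integrability of $\int_0^t e^{is\omega}g\,ds$ into the argument of $\varphi_\varepsilon$, though as you note this is immediate from real-linearity.
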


{\bf Proof.} Since the interaction operator commutes with $\sigma_z\otimes\bbbone$, the diagonal of $\gamma_\varepsilon$ is clearly time-independent. Next, 
\begin{equation}
[\gamma_\varepsilon(t)]_{12} = \tr_\r \big[ e^{-i t H(\varepsilon)}(\gamma\otimes \zeta_\varepsilon)e^{i t H(\varepsilon)}\  (|2\rangle\langle 1|\otimes\bbbone_\r) \big]
= e^{-i \omega_0 t} D_\varepsilon(t) \gamma_{12},
\end{equation}
with
\begin{equation}
\label{m7}
D_\varepsilon(t) = \tr_\r\big[ \zeta_\varepsilon \,  e^{i t [\d\Gamma(\omega) -\frac\lambda2 \varphi_\varepsilon(g)]} e^{-i t [\d\Gamma(\omega) +\frac\lambda2 \varphi_\varepsilon(g)]} \big ] .
\end{equation}
Using the polaron transformation we can rewrite the exponentials in \eqref{m7}. Namely, the relations
\begin{eqnarray*}
W_\varepsilon(f) \d\Gamma(\omega)W_\varepsilon(f)^* &=& \d\Gamma(\omega) -\varphi_\varepsilon(i\omega f)+\tfrac12 \varepsilon\|\sqrt\omega f\|^2_2\\
W_\varepsilon(f)\varphi_\varepsilon(g)W_\varepsilon(f)^* &=& \varphi_\varepsilon(g)-\varepsilon{\rm Im}\langle f,g\rangle,
\end{eqnarray*}
give
\begin{equation*}
W_\varepsilon(f) e^{-it[\d\Gamma(\omega) +\frac\lambda 2\varphi_\varepsilon(g)]}W_\varepsilon(f)^*
= 
e^{-i t [\d\Gamma(\omega) -\varphi_\varepsilon(i\omega f)+\frac\lambda2 \varphi_\varepsilon(g)]} e^{-it\varepsilon[ \frac12 \|\sqrt\omega f\|^2_2-\frac\lambda2 {\rm Im}\langle f,g\rangle ]},
\end{equation*}
so upon choosing
\begin{equation}
\label{m11}
f = \frac{\lambda}{2i\omega}g 
\end{equation}
we get $e^{-it[\d\Gamma(\omega) +\frac\lambda 2\varphi_\varepsilon(g)]}=W_\varepsilon(f)^* e^{-i t \d\Gamma(\omega) }W_\varepsilon(f) \ e^{\frac18 it \varepsilon\lambda^2 \|g/\sqrt\omega\|^2_2}$. 
Using this, and the analogous expression for the first exponential in \eqref{m7}, and also  \eqref{m11}, we arrive at
\begin{eqnarray*}
D_\varepsilon(t) &=& \tr_\r \big[  \zeta_\varepsilon  W_\varepsilon(f) W_\varepsilon(-2e^{i\omega t}f)W_\varepsilon(f) \big]= \tr_\r\big[ \zeta_\varepsilon  W_\varepsilon\big(2[1- e^{i\omega t}]f\big) \big]\nonumber\\
 &=& \tr_\r\big[ \zeta_\varepsilon W_\varepsilon\big(\lambda \tfrac{1- e^{i\omega t}}{i\omega}g\big) \big].
\end{eqnarray*}
This concludes the proof of Proposition \ref{prop1}.\hfill $\blacksquare$
\medskip

Let us now evaluate the decoherence function for the reservoir states presented in Sections \ref{sec:cs}-\ref{sec:thermal}. We denote the quasi-classical decoherence function by
\begin{equation}
D_0(t):=\lim_{\varepsilon\rightarrow 0}D_\varepsilon(t).
\end{equation}
\begin{itemize}
\item[$\bullet$] {\em Coherent state.} The reservoir density matrix is given by \eqref{m13.1} and we obtain from \eqref{m12} and \eqref{chicoherent},
\begin{eqnarray}
D_\varepsilon(t) &=& e^{-\frac14 \varepsilon\lambda^2\| g_t\|^2} \int_\h d\mu_0(f) \, e^{i \sqrt2\lambda  {\rm Re}\langle f, g_t\rangle},
\label{de1}\\
D_0(t)  &=& \int_\h d\mu_0(f) \, e^{i\sqrt2\lambda  {\rm Re}\langle f, g_t\rangle}.
\label{de1.0}
\end{eqnarray}

\item[$\bullet$] {\em Bose-Einstein condensate.} The reservoir density matrix is $\zeta_\varepsilon=|\psi_\varepsilon\rangle\langle\psi_\varepsilon|$, see \eqref{condpurestate},  and we obtain from \eqref{74},
\begin{eqnarray}
\label{52}
D_\varepsilon(t) &=&  e^{-\frac14 \varepsilon\lambda^2\|g_t\|^2}\ L_n\big( \tfrac12 \, \varepsilon\lambda^2\,  | \langle f_0,
g_t \rangle |^2 \big), \qquad n=\lfloor 1/\varepsilon\rfloor,\\
D_0(t) &=& J_0\big( \sqrt2 |\lambda| \ | \langle f_0,g_t \rangle |\big) = \int_{-\pi}^\pi \frac{d\theta}{2\pi} \ e^{i\sqrt 2\lambda\, {\rm Re}\, e^{\i\theta} \langle f_0,g_t\rangle} ,
\label{52.0}
\end{eqnarray}
where $J_0$ is the Bessel function and the second equality is an integral representation of it. 

\item[$\bullet$] {\em Thermal state.}  The characteristic functional of the thermal state at scaled inverse temperature $\beta = \varepsilon\beta'$ is given in \eqref{chithermal} and we obtain from \eqref{m12},
\begin{eqnarray}
D_\varepsilon(t) &=& e^{-\frac14\varepsilon\lambda^2 \langle g_t, \coth(\beta'\varepsilon\omega/2)   g_t\rangle},\label{de3}\\
D_0(t) &=& e^{-\frac{\lambda^2}{2\beta'} \langle g_t,\frac1\omega g_t\rangle}.
\label{44}
\end{eqnarray}
Here we assume that $g$ is such that $\|\omega^{-1/2}g_t\|<\infty$, which imposes the infrared condition, that $\omega(k)^{-1/2}g(k)$ has to be square integrable at $k\sim 0$. For $k\in\rx^3$ and $\omega(k)=|k|$, this means $g(k)\sim |k|^q$ at small $k$, for some $q>-1$. Note that \eqref{de3} reduces to the first factor in \eqref{de1} as $\beta'\rightarrow\infty$. 
\end{itemize}
In all three cases above, the decoherence function cannot exceed unity,
\begin{equation}
    \label{le}
|D_\varepsilon(t)|\le 1,\qquad \forall \varepsilon\ge0, t\ge0.
\end{equation}
This upper bound is seen directly from \eqref{de3} and \eqref{de1} for the thermal and coherent cases. To see that \eqref{le} holds for the condensate case \eqref{52}, we use that $L_n(x)\le e^{x/2}$ for all $x\ge0$ (see \cite{AS},  Eq.~22.14.12), together with $|\langle f_0,g_t\rangle|\le \|f_0\|\, \|g_t\| = \|g_t\|$.
\medskip

\subsection{Decoherence}
\label{sec:deco}

We examine the exponential factor $e^{-\frac14\varepsilon\lambda^2\|g_t\|^2}$ appearing in the quantum decoherence functions \eqref{de1}, \eqref{52} and in a modified way in \eqref{de3}. 
The spectral density of the reservoir is  defined as 
\begin{equation}
\label{specden}
\mathcal J(\omega) = \frac\pi2\omega^2\int_{S^2}|g(\omega,\Sigma)|^2 d\Sigma,\qquad \omega\ge 0,
\end{equation}
where the integral is taken over the unit sphere $S^2\subset \rx^3$, with uniform measure and the function $g(k)$ is represented in spherical coordinates $\rx^3\ni k\leftrightarrow (\omega,\Sigma)\in \rx_+\times S^2$. Consider the photonic dispersion relation $\omega(k)=|k|$. Then from \eqref{m12},
\begin{equation}
\label{int}
\|g_t\|^2 = \frac4\pi  \int_0^\infty \mathcal J(\omega) \frac{1-\cos(\omega t)}{\omega^2} d\omega.
\end{equation}
The behaviour of integrals of the form \eqref{int} as $t\rightarrow\infty$ is linked to the behaviour of $\mathcal J(\omega)$ as $\omega\rightarrow 0_+$. It has been analyzed in connection with decoherence in many works, see for instance \cite{PSE,MM2009,MBS+} and references therein. The most complete analysis, to our knowledge, is carried out in the recent work \cite{Trush} and can be stated as follows.

\begin{thm}[Trushechkin \cite{Trush}]
\label{thm:trush}
Let $S(\omega)$ be an integrable function on $[0,\infty)$ of the form
$$
S(\omega) = \omega^p G(\omega)
$$
for some $p>-1$, where $G(\omega)$ is twice differentiable in an interval $[0,\omega_c)$ 
  for some $\omega_c>0$, $G''(\omega)$ is bounded on $[0,\omega_c)$ and $G(0), G'(0)\neq 0$. Define the function
$$
\Gamma(t) = \int_0^\infty S(\omega)\frac{1-\cos(\omega t)}{\omega^2}d\omega,\qquad t\ge0
$$
and denote by $o(1)$ a function of $t$ which vanishes as $t\rightarrow\infty$. Then the following holds.
\begin{itemize}
\item[1.] If $p>1$  then $\Gamma(t)= \Gamma_\infty:=\int_0^\infty S(\omega)/\omega^2 d\omega +o(1)$.

\item[2.] If $p=1$ then $\Gamma(t)=C_1+S'(0) \ln(t) +o(1)$ for some constant $C_1$.

\item[3.] If $0<p<1$ then $\Gamma(t) = C_2+G(0)C_3 t^{1-p}+o(1)$ for some constants $C_2, C_3$.

\item[4.] If $p=0$ then $\Gamma(t)=C_2+G(0)C_3 t+G'(0)\ln(t)+o(1)$ for the same constants as in point 3.

\item[5.] If $-1<p<0$ then $\Gamma(t) = C_2+G(0)C_4 t^{1-p}+o(t)$ for the same $C_2$ as in point 3.~and for some constant $C_4$.
\end{itemize}
Moreover, if $S(\omega)\ge 0$ (which is the case if $S(\omega)$ is a spectral density), then $\Gamma_\infty>0$, $S'(0)>0$ in point 2.~, $G(0)C_3>0$ in points 3., 4.~and $G(0)C_4>0$ in point 5.  
\end{thm}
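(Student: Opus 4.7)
The plan is to localize the analysis to the infrared region and reduce the problem to two elementary moment integrals. The idea is that the kernel $K_t(\omega) := (1-\cos\omega t)/\omega^2$ behaves like $t^2/2$ for $\omega t \ll 1$ and like $1/\omega^2$ (bounded and integrable) for $\omega t \gg 1$, so the large-$t$ asymptotics of $\Gamma(t)$ are entirely controlled by the small-$\omega$ structure of $S$. Concretely, I would write
\begin{equation*}
\Gamma(t) = \int_0^{\omega_c} S(\omega) K_t(\omega)\, d\omega + \int_{\omega_c}^\infty S(\omega) K_t(\omega)\, d\omega,
\end{equation*}
observe that $S(\omega)/\omega^2$ is in $L^1([\omega_c,\infty))$ (because $S$ is integrable and $1/\omega^2$ is bounded there), and apply the Riemann--Lebesgue lemma to the tail piece to conclude that it tends to the constant $\int_{\omega_c}^\infty S(\omega)/\omega^2\, d\omega$ as $t\to\infty$. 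This disposes of the ultraviolet contribution once and for all.

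For the infrared piece I would Taylor expand, using $G\in C^2([0,\omega_c))$ with bounded $G''$,
\begin{equation*}
G(\omega) = G(0) + G'(0)\,\omega + R(\omega), \qquad |R(\omega)| \le C\omega^2,
\end{equation*}
and substitute into $\int_0^{\omega_c} \omega^p G(\omega) K_t(\omega)\, d\omega$. The remainder term $\int_0^{\omega_c} \omega^p R(\omega)K_t(\omega)\, d\omega$ has integrand dominated by $2 C\omega^p$, integrable for $p>-1$, so dominated convergence plus Riemann--Lebesgue again show it tends to a finite constant, absorbable into $C_2$. The asymptotics of $\Gamma(t)$ thus reduce to the two model integrals
\begin{equation*}
I_1(t) = \int_0^{\omega_c} \omega^{p-2}(1-\cos\omega t)\, d\omega, \qquad I_2(t) = \int_0^{\omega_c} \omega^{p-1}(1-\cos\omega t)\, d\omega,
\end{equation*}
weighted by $G(0)$ and $G'(0)$ respectively.

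The heart of the argument is the rescaling $u = \omega t$, which yields $I_1(t) = t^{1-p}\!\int_0^{\omega_c t} u^{p-2}(1-\cos u)\, du$ and $I_2(t) = t^{-p}\!\int_0^{\omega_c t} u^{p-1}(1-\cos u)\, du$. A case analysis on the convergence of these $u$-integrals at $0$ and at $\infty$ then produces all five regimes. For $p>1$, both integrals are dominated by $\omega^{p-2}$ or $\omega^{p-1}$ which are integrable on $[0,\omega_c]$, so Riemann--Lebesgue gives constants (recovering $\Gamma_\infty$). For $0<p<1$, the integral in $I_1$ converges to $\int_0^\infty u^{p-2}(1-\cos u)\,du<\infty$ and provides the dominant $t^{1-p}$ behavior, while $I_2$ gives a constant. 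The endpoint $p=1$ is the logarithmic borderline for $I_1$: splitting the integrand near $u=\omega_c t$ and using $\int_1^{\omega_c t} u^{-1}\cos u\, du = O(1)$ extracts the $\ln t$ term with prefactor $G(0)=S'(0)$. The case $p=0$ is analogous but produces a linear $t$ from $I_1$ (scale $t^{1-p}=t$) together with a logarithm from $I_2$ at its borderline. For $-1<p<0$ both integrals grow as power laws; the leading one is $t^{1-p}$ from $I_1$, while $I_2$ contributes a subleading $t^{-p} = o(t^{1-p})$ term absorbed in the $o(t)$ in item 5.

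The main obstacle, and what I would spend most care on, is the bookkeeping at the two borderline exponents $p=0$ and $p=1$, where the change of variables produces integrals diverging logarithmically at the upper endpoint and one must subtract and add back the correct constants so that the $\ln t$ coefficient comes out as an honest derivative of $S$ at $0$ rather than a cut-off dependent quantity. The positivity statements at the end follow once the constants are identified explicitly: $\Gamma_\infty>0$ is immediate from $S\ge0$ and $S\not\equiv 0$ (guaranteed since $G(0)\neq 0$); $S'(0)=G(0)>0$ because $S\ge 0$ and $S(0)=0$ force $G(0)\ge 0$, with strict positivity from $G(0)\neq 0$; and $C_3, C_4$ are, up to explicit positive factors, values of the convergent integrals $\int_0^\infty u^{p-2}(1-\cos u)\,du$, which are positive since the integrand is nonnegative.
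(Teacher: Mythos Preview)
The paper does not actually prove this theorem: it is quoted from Trushechkin's work \cite{Trush} and used as a black box (see the text preceding the statement: ``The most complete analysis, to our knowledge, is carried out in the recent work \cite{Trush} and can be stated as follows''). So there is no proof in the paper to compare against.

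Your approach---localize to $[0,\omega_c)$, dispose of the ultraviolet tail by Riemann--Lebesgue, Taylor expand $G$ to second order using the bounded-$G''$ hypothesis, and reduce to the two scaling integrals $I_1,I_2$---is the standard and correct one, and it does produce all five regimes with the right constants. A few places deserve slightly more care than you indicate. First, when you say ``dominated convergence plus Riemann--Lebesgue'' for the remainder term, dominated convergence is not actually doing anything: the point is simply that $\omega^{p-2}R(\omega)\in L^1([0,\omega_c))$, so Riemann--Lebesgue alone gives the constant. Second, in the case $0<p<1$ the rescaled integral $\int_0^{\omega_c t}u^{p-2}(1-\cos u)\,du$ differs from $\int_0^\infty$ by a tail whose leading part is $(\omega_c t)^{p-1}/(1-p)$; multiplying by $t^{1-p}$ this leaves an $O(1)$ constant, not $o(1)$, so you must track that constant into $C_2$ rather than discard it. The same bookkeeping issue recurs at $p=0$. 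Third, your positivity argument for $G(0)$ when $S\ge 0$ is correct for all $p>-1$ (since $G(\omega)=S(\omega)\omega^{-p}\ge 0$ for $\omega>0$, hence $G(0)\ge 0$, and $G(0)\neq 0$ by hypothesis), and the identification $S'(0)=G(0)$ at $p=1$ is right. None of these are gaps in the strategy; they are exactly the ``bookkeeping at the borderline exponents'' you flagged yourself.
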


Theorem \ref{thm:trush} together with \eqref{int} has the following consequence for the factor $e^{-\frac14 \varepsilon\lambda^2\|g_t\|^2}$ appearing in the decoherence functions (\eqref{de1}, \eqref{52}, \eqref{de3}).

\begin{cor}
\label{cor:deco}
Suppose the reservoir spectral density \eqref{specden} satisfies $\mathcal J(\omega)=\omega^pG(\omega)$ with $G$ as in the statement of Theorem \ref{thm:trush}. 
Then as $t\rightarrow\infty$,
$$
 e^{-\frac14 \varepsilon\lambda^2\|g_t\|^2} \sim
\left\{
\begin{array}{ll}
e^{-\frac1\pi\varepsilon\lambda\Gamma_\infty}, &  p>1,\\
c_0 t^{-\varepsilon\lambda^2 c_1}, & p=1,\\
c_0e^{-\varepsilon\lambda^2c_1 t}\, t^{-\varepsilon\lambda^2 c_2}, & p=0,\\
c_0e^{-\varepsilon\lambda^2c_1 t^{1-p}}, & -1<p<1, p\neq 0,
\end{array}
\right.
$$
for some $c_0>0$ (depending on $\varepsilon\lambda^2$ and converging to $1$ as $\varepsilon\lambda^2\rightarrow 0$) and some $c_1>0$ and $c_2\in\rx$ (both independent of $\varepsilon$, $\lambda$). By $A\sim B$ we mean that $|A-B|\rightarrow 0$ as $t\rightarrow\infty$.
\end{cor}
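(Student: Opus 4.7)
The plan is to identify $\|g_t\|^2$ with the integral $\Gamma(t)$ appearing in \cref{thm:trush} for the choice $S(\omega)=\mathcal J(\omega)$. Reading off from \eqref{int} one obtains $\|g_t\|^2=\tfrac{4}{\pi}\Gamma(t)$, so that
$$
e^{-\tfrac14\varepsilon\lambda^2\|g_t\|^2}=e^{-\tfrac{1}{\pi}\varepsilon\lambda^2\Gamma(t)}.
$$
The corollary then reduces to substituting the five asymptotic regimes of $\Gamma(t)$ supplied by \cref{thm:trush} into this exponential. The only cosmetic step is that the theorem's cases $0<p<1$ and $-1<p<0$ are merged into the single stretched-exponential regime of the corollary, since in both the leading term of $\Gamma(t)$ is of the form $G(0)C\,t^{1-p}$ for a suitable constant $C>0$.

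Next I would identify the constants case by case. For $p>1$, continuity of $\exp$ applied to $\Gamma(t)\to\Gamma_\infty$ gives the stated limit. For $p=1$, the $\ln t$ term produces a power of $t$, the exponent $c_1=\mathcal J'(0)/\pi$ being positive by the final statement of \cref{thm:trush}. For $p=0$, the linear term produces exponential decay and the $\ln t$ term an extra algebraic factor, with $c_1=G(0)C_3/\pi>0$ and $c_2=G'(0)/\pi\in\rx$. In the stretched regime one reads off $c_1=G(0)C_k/\pi>0$ with $k\in\{3,4\}$. In each instance the prefactor takes the form $c_0=e^{-\varepsilon\lambda^2 C_\ast/\pi}$ for some constant $C_\ast$, and therefore $c_0\to 1$ as $\varepsilon\lambda^2\to 0$. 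Positivity of $c_1$ in every case follows from the positivity claims at the end of \cref{thm:trush}, using $\mathcal J\ge 0$.

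The one point requiring care is the precise meaning of the symbol $\sim$, namely $|A-B|\to 0$, in the decaying regimes. When the remainder inside $\Gamma(t)$ is $o(1)$ (cases 1--4 of \cref{thm:trush}), the exponentiated remainder is a factor $1+o(1)$ multiplying a bounded or decaying quantity, so the required estimate is immediate. I expect the only slightly delicate step to be case 5 of \cref{thm:trush}, where the remainder is $o(t)$ rather than $o(t^{1-p})$. There, since $1-p>1$, one has $t\ll t^{1-p}$, so the multiplicative correction $e^{-\varepsilon\lambda^2\epsilon(t)/\pi}$, which may grow at worst like $e^{c\,t}$, is still absorbed by the leading decay $e^{-c_1 t^{1-p}}$. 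This is the main, and essentially only, obstacle in the whole argument, and it is dispatched by a short bookkeeping argument; everything else is routine substitution.
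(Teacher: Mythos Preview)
Your proposal is correct and follows exactly the route the paper intends: the corollary is presented there as an immediate consequence of \cref{thm:trush} together with \eqref{int}, with no further argument given. Your explicit identification of the constants and, in particular, your careful treatment of the $o(t)$ remainder in case~5 (where $1-p>1$ ensures the leading super-exponential decay absorbs the possibly growing correction) supply more detail than the paper itself provides.
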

Corollary \ref{cor:deco} shows that depending on the infrared behaviour of the spectral density of the reservoir, the factor  $e^{-\frac14 \varepsilon\lambda^2\|g_t\|^2}$ in the decoherence functions \eqref{de1}, \eqref{52} is constant, decays polynomially, sub-exponentially, exponentially or super-exponentially, as $t\rightarrow\infty$.  The decoherence functions \eqref{de1}-\eqref{52.0} depend additionally on time via the term ${\rm Re}\langle f,g_t\rangle$. We will assume that
\begin{equation}
\int_{\mathbb R^3} \frac{f(k)g(k)}{\omega(k)} d^3k <\infty 
\end{equation}
for all $f$ in the support of the measure $\mu_0$ (coherent state, \eqref{de1}) and $f=f_0$ (BEC, \eqref{52}). By the Riemann-Lebesgue Lemma the oscillating term in $g_t$ \eqref{m12} vanishes in the limit of large times,
\begin{equation}
\label{49}
\lim_{t\rightarrow\infty} \langle f, g_t\rangle = -i\langle f,\omega^{-1} g\rangle.
\end{equation}

We now state our main observation on decoherence in the quasi-classical versus the quantum case.

\begin{cor}[Spin decoherence from coupling to quantum/classical reservoir]
\label{cor:2}
Suppose the reservoir spectral density behaves as $\mathcal J(\omega)\sim \omega^p$ for small $\omega$, some $p>-1$, as in Theorem~\ref{thm:trush}.
\begin{itemize}
\item[(a)]
When a spin is coupled to a quantum reservoir ($\varepsilon >0$) in a quantum coherent state or a quantum Bose-Einstein condensate, the spin  undergoes full decoherence ($p\le 1$), 
\begin{equation}
\lim_{t\rightarrow\infty} D_\varepsilon(t)=0\qquad \mbox{(coupled to quantum reservoir)}.
\end{equation}
The speed of the decoherence is governed by the value of $p$ according to Corollary \ref{cor:deco}.

\item[(b)] When a spin is coupled to a reservoir in a classical ($\varepsilon =0$) coherent state or a classical Bose-Einstein condensate state, the spin undergoes partial decoherence only, 
\begin{equation}
\lim_{t\rightarrow\infty} D_0(t)=D_0(\infty)\neq 0 \qquad \mbox{(coupled to classical reservoir)}.
\end{equation}

\item[(c)] When a spin is coupled to a reservoir in a quantum or  classical {\em thermal} state, the spin undergoes partial decoherence if $p>2$ and full decoherence if $0<p<2$ (for $p\le 0$ the thermal state is not defined). Moreover, the ratio of the decoherence functions of the spin coupled to a quantum and classical thermal reservoir is, for $t\rightarrow\infty$,
\begin{equation}
\frac{D_\varepsilon(t)}{D_0(t)} \propto
e^{-\frac1\pi\varepsilon\lambda\Gamma_\infty}, \qquad p>0.
\label{50}
\end{equation}
The decoherence speed is thus the same in the two cases.
\end{itemize}
\end{cor}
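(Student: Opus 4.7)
The plan is to read off the large-time behaviour of the explicit decoherence functions \eqref{de1}--\eqref{44} by combining the asymptotic analysis of the Gaussian factor $e^{-\varepsilon\lambda^2\|g_t\|^2/4}$ provided by \cref{cor:deco} with the Riemann--Lebesgue limit \eqref{49} for the oscillatory pieces. For item (a) with a coherent reservoir \eqref{de1}, the $\mu_0$-integrand is a pure phase, so the integral is bounded by $1$ uniformly in $t$, giving $|D_\varepsilon(t)|\le e^{-\varepsilon\lambda^2\|g_t\|^2/4}$; this vanishes as $t\to\infty$ under the hypothesis $p\le 1$ thanks to \cref{cor:deco}. In the BEC case \eqref{52}, the inequality $L_n(x)\le e^{x/2}$ for $x\ge 0$ (already used to prove \eqref{le}) yields
$$
|D_\varepsilon(t)|\le e^{-\frac{\varepsilon\lambda^2}{4}\big(\|g_t\|^2-|\langle f_0,g_t\rangle|^2\big)};
$$
by \eqref{49} the quantity $|\langle f_0,g_t\rangle|$ stays bounded while $\|g_t\|^2\to\infty$ for $p\le 1$, so the exponent diverges to $-\infty$.

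For item (b) the Gaussian prefactor is absent. Dominated convergence applied to \eqref{de1.0} and \eqref{52.0} together with \eqref{49} produces the explicit limits
$$
D_0(\infty)=\int_\h d\mu_0(f)\, e^{i\sqrt{2}\lambda\,{\rm Im}\langle f,\omega^{-1}g\rangle}\qquad\text{and}\qquad D_0(\infty)=J_0\big(\sqrt{2}|\lambda|\,|\langle f_0,\omega^{-1}g\rangle|\big),
$$
both finite and generically nonzero, establishing partial decoherence.

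For item (c) I split $\coth(\beta'\varepsilon\omega/2)=\tfrac{2}{\beta'\varepsilon\omega}+h(\beta'\varepsilon\omega)$, where $h(x):=\coth(x/2)-2/x$ is analytic on $[0,\infty)$ with $h(x)=x/6+O(x^3)$ near $0$ and $h(x)\to 1$ as $x\to\infty$. Plugging into \eqref{de3} factorises $D_\varepsilon(t)=D_0(t)\,e^{-\frac{\varepsilon\lambda^2}{4}\langle g_t,h(\beta'\varepsilon\omega)g_t\rangle}$. Rewriting $\langle g_t,\omega^{-1}g_t\rangle=\frac{4}{\pi}\int_0^\infty \frac{\mathcal J(\omega)}{\omega}\frac{1-\cos(\omega t)}{\omega^2}\,d\omega$ puts the classical factor in the form of \cref{thm:trush} with effective spectral density $\mathcal J(\omega)/\omega\sim\omega^{p-1}$ near $0$: cases 2--5 of \cref{thm:trush} apply for $0<p<2$ and force divergence of $\langle g_t,\omega^{-1}g_t\rangle$ (full decoherence), while case 1 applies for $p>2$ yielding a finite limit (partial decoherence). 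The quantum correction has integrand $\mathcal J(\omega)\,h(\beta'\varepsilon\omega)\sim\omega^{p+1}$ near $0$; since $p>0$ puts us in case 1 of \cref{thm:trush}, this integral converges to a finite constant as $t\to\infty$, yielding the ratio \eqref{50} and hence the equality of decoherence speeds.

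The main obstacle is the careful verification of the hypotheses of \cref{thm:trush} for the auxiliary spectral densities $\mathcal J(\omega)/\omega$ and $\mathcal J(\omega)\,h(\beta'\varepsilon\omega)$ appearing in (c): one must check twice-differentiability near the origin and non-vanishing of the relevant Taylor coefficients of the corresponding regular factors $G(\omega)$ and $G(\omega)\,h(\beta'\varepsilon\omega)/\omega$, plus preservation of ultraviolet integrability (automatic since $h$ is bounded at infinity and $\mathcal J$ is assumed integrable). Once these technicalities are dispatched, the asymptotic formulas follow by direct substitution into \eqref{de3}--\eqref{44}.
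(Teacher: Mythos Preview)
Your proposal is correct and follows essentially the same route as the paper's own proof: \cref{cor:deco} for the Gaussian factor in (a), the Riemann--Lebesgue limit \eqref{49} for (b), and the splitting $\varepsilon\coth(\beta'\varepsilon\omega/2)-2/(\beta'\omega)=\varepsilon h(\beta'\varepsilon\omega)$ combined with \cref{thm:trush} for (c). Your treatment is in fact slightly more detailed than the paper's: for the BEC case in (a) you make explicit why the Laguerre factor cannot compensate the decaying Gaussian (the paper just says the Laguerre factor is bounded), and in (c) you spell out the full/partial dichotomy for $D_0(t)$ itself via the effective density $\mathcal J(\omega)/\omega\sim\omega^{p-1}$, which the paper's proof leaves implicit and only addresses the ratio \eqref{50}.
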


{\bf Proof of Corollary \ref{cor:2}.} Point (a) follows directly from \eqref{de1}, \eqref{52} and Corollary \ref{cor:deco}. Point (b) follows from \eqref{de1.0}, \eqref{52.0} and \eqref{49}. Finally we show (c). As remarked after \eqref{44}, the decoherence function is defined for $p>0$ only in the thermal case. We write
\begin{eqnarray}
\label{ratio}
\frac{D_\varepsilon(t)}{D_0(t)} &=& \exp\Big[ -\tfrac{\lambda^2}{4}\, \langle g_t, \Big[ \varepsilon\coth(\beta'\varepsilon\omega/2) - 2/(\beta'\omega)\big]g_t\rangle\Big]\\
&=&\exp \Big[ -\tfrac{\lambda^2}{2}\int_0^\infty S_\varepsilon(\omega)\frac{1-\cos(\omega t)}{\omega^2}d\omega\Big],
\nonumber
\end{eqnarray}
where 
$$
S_\varepsilon(\omega) = \frac{4}{\pi}\big[\varepsilon\coth(\beta'\varepsilon\omega/2) - 2/(\beta'\omega)\big] \mathcal J(\omega).
$$
For small $\omega$ we have the expansion $\varepsilon\coth(\beta'\varepsilon\omega/2) - 2/(\beta'\omega) = \tfrac16 \varepsilon^2\beta'\omega +\cdots$ and so $S_\varepsilon(\omega)\propto \omega \mathcal J(\omega)$ as $\omega\rightarrow 0$. Taking $\mathcal J(\omega)=\omega^p G(\omega)$, $p>0$, as in Corollary \ref{cor:deco} gives $S_\varepsilon(\omega) \propto \omega^{p+1}G(\omega)$. Then we apply theorem \ref{thm:trush} to arrive at \eqref{50}. This completes the proof of Corollary \ref{cor:2}.\hfill $\blacksquare$

\subsection{Examples and Illustrations.}

\begin{itemize}
\item[(1)] Consider a coherent state concentrated on a single one-particle wave function $f_0$ ($\mu_0$ is the Dirac measure concentrated on $f_0$). Then the decoherence function \eqref{de1.0} satisfies $D_0(t)=1$ for all times and all $\lambda$. As shown in Corollary \ref{cor:deco} above, $e^{-\frac14\varepsilon\lambda^2\|g_t\|^2}$ converges to zero for generic form factors $g$, so that the decoherence function \eqref{de1} satisfies $\lim_{t\rightarrow \infty} D_\varepsilon(t)= 0$ for all $\varepsilon>0$. The quasi-classically coupled spin does not decohere at all, while any degree of quantumness $\varepsilon>0$ produces full decoherence. 

\item[(2)] 
Take the initial state \eqref{m13.1} for a mixture of coherent states $\Psi_{e^{-i\theta}f_0}$ where $f_0\in\h$ is fixed and the $\theta\in[-\pi,\pi]$ is chosen uniformly in this interval. As remarked in Section \ref{sect:BEC}, in the classical limit, this is the same case as the Bose-Einstein condensate. From \eqref{de1} we have 
\begin{equation}
\label{m29}
D_0(t) = \int_{-\pi}^\pi \frac{d\theta}{2\pi} \ e^{i\sqrt 2\lambda\, {\rm Re}\, e^{\i\theta} \langle f_0,g_t\rangle} 
= J_0\big(|q(t)|\big),\qquad q(t) = \sqrt2\lambda \langle f_0,g_t\rangle.
\end{equation}
We have used the  integral representation of the Bessel function (see \cite{MF} (5.3.66)), for $z=x+iy\in\cx$,
\begin{equation}
\label{m31}
J_0(|z|) = \int_{-\pi}^\pi\frac{d\theta}{2\pi} e^{i \, {\rm Im} \, z e^{i\theta}} =\int_{-\pi}^\pi\frac{d\theta}{2\pi} e^{i \,(x\sin\theta +y\cos\theta) }.
\end{equation}
\begin{figure}[h]
\centering
\includegraphics[width=10cm]{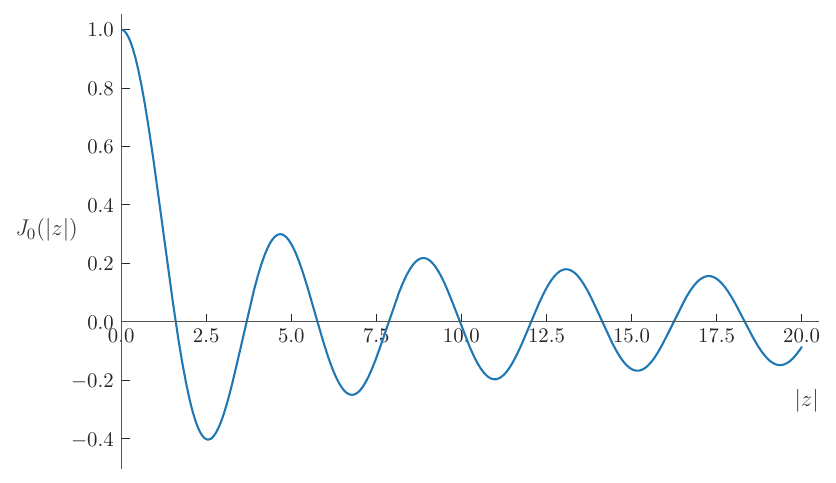}
\captionsetup{labelformat=empty}
\caption{Fig.1:\ Graph of the Bessel function of the first kind $J_0(|z|)$.}
\label{Figure 1}
\end{figure}
The value of the integral in \eqref{m31} depends on the modulus $|z|$ only, as for any $\alpha\in\rx$,
$$
\int_{-\pi}^\pi\frac{d\theta}{2\pi} e^{i \, {\rm Im} \, z e^{i\alpha} e^{i\theta}} = \int_{-\pi}^\pi\frac{d\theta}{2\pi} e^{i \, {\rm Im} \, z e^{i(\theta+\alpha)}}= \int_{-\pi}^\pi\frac{d\theta}{2\pi} e^{i \, {\rm Im} \, z e^{i\theta}},
$$
where the last equality holds since the value of the integral is the same when integrated over any interval of size $2\pi$, the period of the integrand as a function of $\theta$. In particular, we can replace $z$ by $iz$ in the integral representation, which is the same as replacing $\rm Im$ with $\rm Re$ in the exponent of the integrand in \eqref{m31}. 

As $J_0(|q(t)|)<1$ for $t>0$, the system does experience a loss of coherence, but it does generally not fully decohere.  Indeed, if $f_0(k)$ is a function such that $\langle f_0, \tfrac1\omega g\rangle$ is finite, then by the Riemann-Lebesgue Lemma\footnote{
In the case $k\in\rx^3$, $\omega(k)=|k|$ and $J(0)>0$, the form factor behaves as $|g(k)|\sim |k|^{-1}$ for small $k$. The condition \eqref{RLL} then implies that for small $\alpha$, $f_0(k)\sim |k|^\alpha$ for some $\alpha>-1$. For a general square integrable function $f(k)$ the infrared behaviour is $f(k)\sim |k|^{\alpha'}$, some $\alpha'>-3/2$. The condition $f_0\sim |k|^\alpha$ to have partial decoherence only (namely, $\alpha>-1$) is  stronger, meaning that the slow reservoir modes should not be too strongly populated in the initial reservoir coherent state built from $f_0$. This finding is in agreement with previous results obtained in related models, albeit for quantum reservoirs ($\varepsilon =1$ in our setting) \cite{PSE, MBS+, Trush}, where it was observed that the decoherence is only partial provided the  presence of (or the coupling to) the infrared reservoir modes is not strong enough. }, 
\begin{equation}
\label{RLL}
\lim_{t\rightarrow\infty} q(t) =\sqrt 2\lambda \langle f_0, \tfrac1\omega g\rangle,
\end{equation}
and so the asymptotic value of the decoherence function is
\begin{equation}
\label{62}
\lim_{t\rightarrow\infty} D_0(t) = J_0\big(\sqrt2 |\lambda|\, |\langle f_0,\tfrac{1}{\omega}g\rangle| \big),
\end{equation}
which is generically nonzero, except when $\sqrt2 |\lambda|\, |\langle f_0,\tfrac{1}{\omega}g\rangle|$ happens to be one of the zeros of $J_0$. Those zeroes form a discrete set (Figure \ref{Figure 1}).  Moreover, as 
$$
J_0(z)\sim \frac{\cos(|z|-\pi/4)}{\sqrt{|z|}},\qquad |z|\rightarrow\infty,
$$ 
we see that asymptotically in time the decoherence function decays as $|\lambda|^{-1/2}$ for large coupling constants $\lambda$. This means that the stronger the coupling to the reservoir, the bigger the loss of coherence of the spin, which is naturally expected. If $\lambda$ is small, then \eqref{62} is $1+O(\lambda)$, which expresses  the stability of the free dynamics. On the other hand, the exponential factor in the quantum case \eqref{52} converges to zero for generic form factors (see Corollary \ref{cor:deco}) while $L_n(\frac12\varepsilon\lambda^2|\langle f_0,g_t\rangle|^2)$ is bounded. Thus $D_\varepsilon(t)$ converges to zero as $t\rightarrow\infty$. We conclude that the quasi-classical spin undergoes partial decoherence while the quantum spin has full decoherence. 

\end{itemize}

\subsection{Non-Markovianity}
\label{sec:nm}

To analyze the (non-)Markovian behaviour of the spin dynamics we use the measure for Markovianity introduced in \cite{Laine-et-al}. The basic idea behind this measure is the following. The
distinguishability between two quantum states $\rho$ and $\nu$ is measured by the trace distance $\|\rho-\nu\|_1$, where $\|A\|_1={\rm tr}\sqrt{A^*A}$.  The smaller the value of $\|\rho-\nu\|_1$ the less the states $\rho$, $\nu$ are distinguishable. For a Markovian dynamics the distinguishability between two quantum states never decreases during the evolution.  In \cite{Laine-et-al} the following measure of non-Markovianity is introduced,
\begin{equation}
\mathcal N = \max_{\rho(0),\nu(0)}\int_{S_+} \partial_t \|\rho(t)-\nu(t)\|_1 dt,
\end{equation}
where the maximum is taken over all initial density matrices $\rho(0), \nu(0)$ and
\begin{equation}
\label{splus}
S_+=\{t\ge 0\ :\ \partial_t \|\rho(t)-\nu(t)\|_1>0\}.
\end{equation}
Whenever $\mathcal N>0$ the dynamics is called non-Markovian. In the current work, we do not attempt to find the actual value of $\mathcal N$. Rather, we focus on the region $S_+$ of times $t$ during which the process is non-Markovian. $S_+$, defined in \eqref{splus}, is actually {\em independent} of the initial values $\rho(0)$, $\nu(0)$, it depends only on the decoherence function, as we show below. 
\medskip

Let $\gamma, \delta$ be two $2\times 2$ density matrices with matrix elements $[\cdot]_{ij}$. Then
$$
\big\| \gamma-\delta \big\|_1 = \tr |\gamma-\delta| =\sqrt{(\gamma_{11}-\delta_{11})^2 +|\gamma_{12}-\delta_{12}|^2}.
$$
As the diagonal density matrix elements are time-independent in our energy-conserving model, we obtain
$$
\partial_t \big\| \gamma(t)-\delta(t) \big\|_1 = \frac12 \frac{\partial_t|v(t)|^2}{\sqrt{d^2+|v(t)|^2}},
$$
where 
$d=\gamma_{11}-\delta_{11}$, $v=\gamma_{12}-\delta_{12}$. Then 
$$
\partial_t\big\| \gamma(t)-\delta(t) \big\|_1 >0 \ 
\Longleftrightarrow\  \partial_t |v(t)|^2>0.
$$
From Proposition \ref{prop1} we have for $\varepsilon>0$, $|v(t)|=|D_\varepsilon(t)|\, |v(0)|$, so that for $v(0)\neq 0$
\begin{equation}
\label{m34}
\partial_t\big\| \gamma(t)-\delta(t) \big\|_1 >0 \ \Longleftrightarrow \ 
 \partial_t |D_\varepsilon(t)|^2>0 \ \Longleftrightarrow \ 
 \partial_t |D_\varepsilon(t)|>0.
\end{equation}
This shows that $S_+$, \eqref{splus}, is equivalently given by 
\begin{equation}
\label{splus1}
S_+ = \{t\ge 0\ :\ \partial_t |D_\varepsilon(t)| >0\}.
\end{equation}
From \eqref{splus}, \eqref{splus1} we conclude that the moments in time where non-Markovianity is built up are exactly the moments in time where coherence is increased. We now analyze the sets $S_+$ in more detail for the Bose-Einstein condensate state and the thermal state, as a function of $0\le \varepsilon\le 1$.

\subsubsection{Bose-Einstein condensate}  The time derivative of the decoherence function \eqref{52} is given by
\begin{align*}
\partial_t (|D_{\varepsilon} (t)|^2 ) & =  \\  = e^{- \frac{1}{2} \varepsilon || g_t ||^2} & \Biggl( -\frac{1}{2} \varepsilon | \langle f_0 | g_t \rangle |^2 {\rm Re} \big(\langle e^{i t \omega} g | g_t \rangle\big) L_n\Big(\frac{1}{2} \varepsilon | \langle f_0 | g_t \rangle |^2 \Big)^2 + \\ 
+ \varepsilon \, & {\rm Re} \big( \langle f_0 | g_t \rangle \langle f_0 | e^{i t \omega} g \rangle \big) L_{n} \Big(\frac{1}{2} \varepsilon | \langle f_0 | g_t \rangle |^2 \Big) L_{n} ' \Big(\frac{1}{2} \varepsilon | \langle f_0 | g_t \rangle |^2 \Big) \Biggr),
\end{align*}
where $ \varepsilon=\frac{1}{n} $ and we have taken $\lambda=1$. 
For concreteness of the numerical simulations, we take the form factor $g$ and the single particle state $f_0$ to satisfy
\begin{equation}
 g(\omega) = f_0(\omega) = \frac{1}{\omega_c^{(\alpha +3)/ 2} (\Gamma(\alpha+3) )^{\frac{1}{2}}} e^{-\frac{\omega}{2 \omega_c}} \omega^{\frac{\alpha}{2}},
\label{gf0}
\end{equation}
which is a radial function in $ L^2( \rx^3 ; dk )$ with $ \omega = |k|$ and where $\Gamma(\cdot)$ is the `gamma function' (reducing to the factorial function for integer arguments). The  parameter $ \alpha >-3$  characterizes the infra-red behaviour, $g(\omega)\sim \omega^{\alpha/2}$ for $\omega$ small, which amounts to $\mathcal J(\omega)\sim\omega^{2+\alpha}$ for the spectral density \eqref{specden}. The parameter $ \omega_c > 0$ is a smooth  ultra-violet cutoff. The prefactor in \eqref{gf0} is chosen so that the function is normalized in $ L^2( \rx_+, \omega^2 d \omega )$.
\medskip

In Fig.~\ref{fig:non-markBEC} we plot $\partial_t|D_\varepsilon(t)|^2$ for varying $\varepsilon=1/n$ and times $t$. We see that 
\begin{itemize}
\item[$\bullet$] Starting with an initial phase of Markovianity the spin dynamics oscillates between being Markovian and non-Markovian, with decreasing amplitude. 

\item[$\bullet$] The frequency of the oscillation between the two regimes is faster if the reservoir infra-red modes are weakly coupled to the spin: In panels (a) and (b), the infra-red modes of the reservoir are strongly coupled to the spin (infra-red singular coupling, $\alpha = -2.9$ in \eqref{gf0}), while in panels (c) and (d) they are weakly coupled ($\alpha=40$).

\item[$\bullet$] As $\varepsilon$ decreases from $\varepsilon =1$ (quantum) to $\varepsilon =0$ (classical) the regions of time where non-Markovianity is accumulated quickly stabilizes to become $\varepsilon$-independent. We do not detect an obvious universal ($\alpha, \omega_c$ independent) relation between the size of $\varepsilon$ and time regions of non-Markovianity. This last point is further illustrated in Fig.~\ref{Fig:non-Mark-crossBEC}, where we compare $\partial_t|D_\varepsilon(t)|^2$ for various values of $\varepsilon$. The graphs superpose cross-sections, slices of the graphs in Fig.~\ref{fig:non-markBEC} for a few values of $\varepsilon$. 
\end{itemize}

\begin{figure}[h]
(a)
\begin{subfigure}{0.48\textwidth}
\centering
\includegraphics[width = \textwidth]{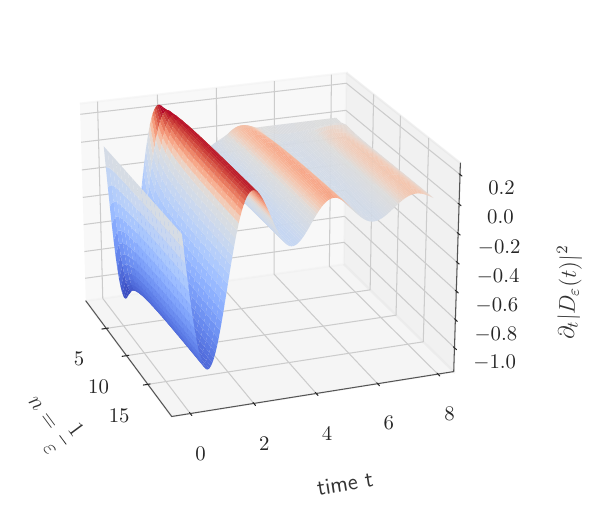}
\end{subfigure}
(b)\begin{subfigure}{0.42\textwidth}
\centering
\includegraphics[width = \textwidth]{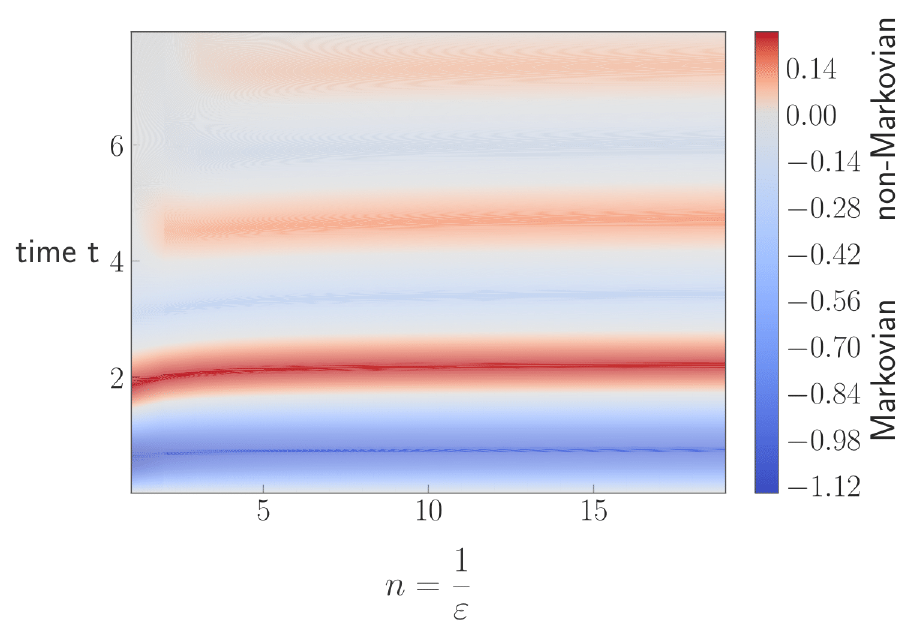}
\end{subfigure}
(c)\begin{subfigure}{0.48\textwidth}
\centering
\includegraphics[width = \textwidth]{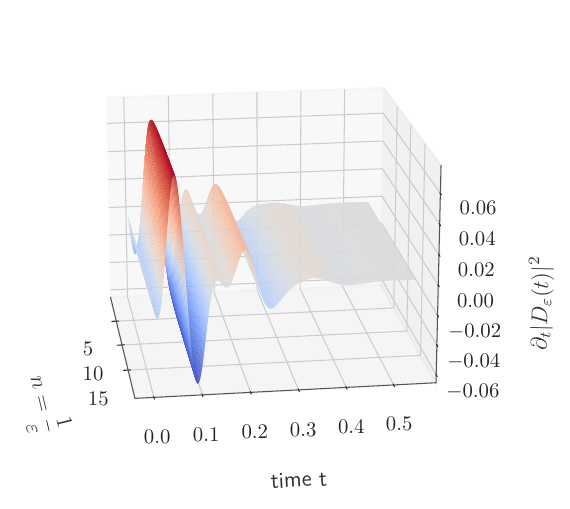}
\end{subfigure}
(d)\begin{subfigure}{0.42\textwidth}
\centering
\includegraphics[width = \textwidth]{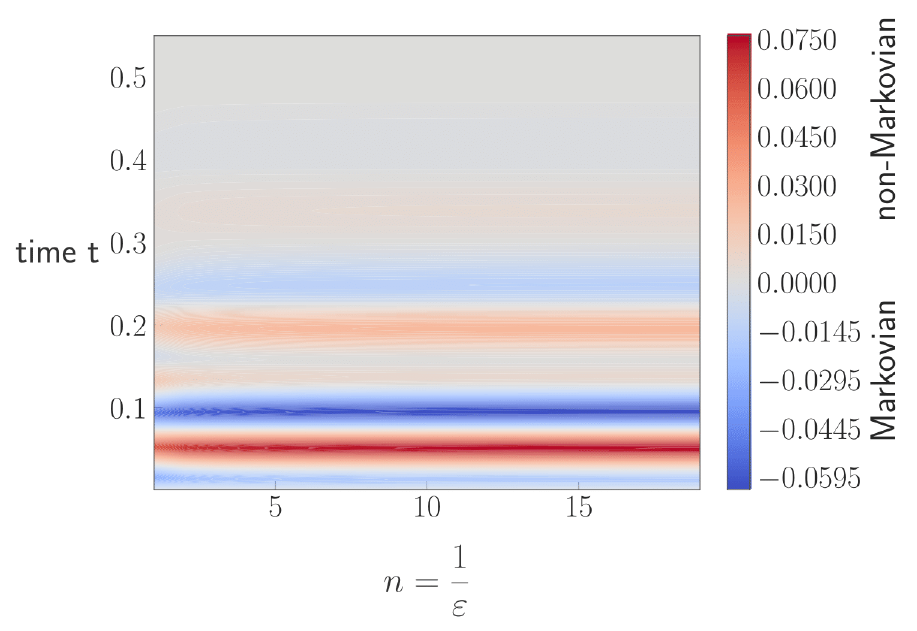}
\end{subfigure}
\captionsetup{labelformat=empty}
\caption{\label{fig:non-markBEC}Fig.2: Regions of non-Markovianity of the spin coupled to a BEC reservoir, as a function of time $t$ and classical parameter $\varepsilon$.  The BEC wave function $f_0$ and the form factor $g$ are given in \eqref{gf0}, with $\omega_c=1$.  In (a) and (b) we take $\alpha=-2.9$. In (c) and (d) we take  $\alpha=40$. Times for which $\partial_t|D_\varepsilon(t)|^2>0$ (red colour in the online version) are times in which non-Markovianity is built up.}
\end{figure}

\begin{figure}[h]
\centering
\begin{subfigure}{0.49\textwidth}
\centering
\includegraphics[width = \textwidth]{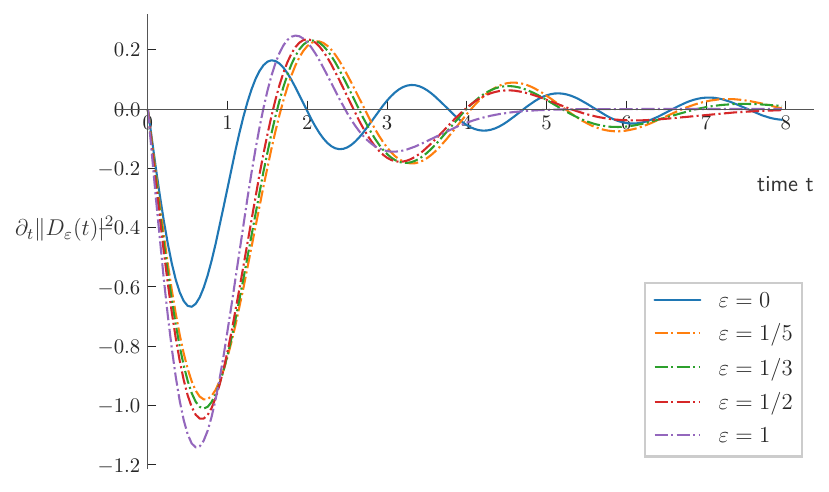}
\end{subfigure}
\begin{subfigure}{0.49\textwidth}
\centering
\includegraphics[width = \textwidth]{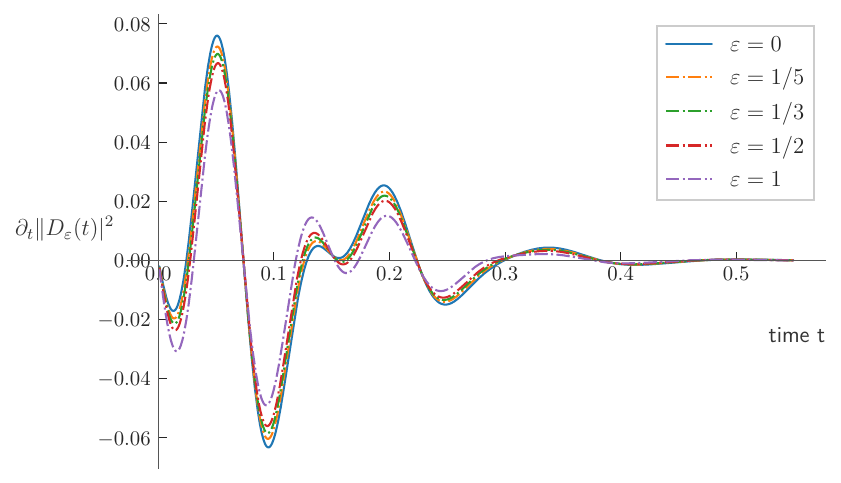}
\end{subfigure}
\captionsetup{labelformat=empty}
\caption{\label{Fig:non-Mark-crossBEC}Fig.~3: 
Comparison of regions of non-Markovianity for the dynamics of the spin coupled to the BEC, for various values of the parameter $\varepsilon$ between $0$ and $1$. $\varepsilon=0$ is the classical limit. The BEC wave function $f_0$ and the form factor $g$ are given in \eqref{gf0}. Left panel: $\alpha=-2.9$ and $\omega_c=1$. Right panel: $\alpha=40$ and $\omega_c=0.2$.}
\end{figure}

\subsubsection{Thermal state} Starting from \eqref{de3},
\begin{eqnarray}
    \partial_t | D_\varepsilon (t) |^2 &=& -\frac{2\varepsilon \lambda^2}{\pi} 
    \int_0^\infty \frac{\sin(\omega t)}{\omega}\coth\Big(\frac{\beta'\varepsilon\omega}{2}\Big)\mathcal J(\omega) d\omega\nonumber\\
    && \times \exp\Bigg[ -\frac{2\varepsilon \lambda^2}{\pi} \int_0^\infty \frac{1-\cos(\omega t)}{\omega^2}\coth\Big(\frac{\beta' \varepsilon\omega}{2}\Big)\mathcal J(\omega) d\omega\Bigg],
    \label{63}
\end{eqnarray}
where the spectral density $\mathcal J(\omega)$ is given in \eqref{specden}. As $\mathcal J(\omega)\ge 0$ it is clear that for small values of $t\ge 0$, the dynamics is Markovian, since $\partial_t | D_\varepsilon (t) |^2 \le 0$. The duration of this initial period of Markovianity depends on the infra-red behaviour of the spectral density: If infra-red modes are strongly coupled, meaning that $\mathcal J(\omega)$ is large for small $\omega\sim 0$ (divergent as $\omega\rightarrow 0$), then the first integral in \eqref{63} is large, positive for small values of $t$. If the infra-red modes are suppressed in the coupling ($\mathcal J(\omega)$ small for small $\omega$) then the positive value of that integral is smaller. This indicates that the stronger the infra-red modes are coupled to the spin the longer the duration of the initial time window of Markovianity. The conclusion is valid for all values of $0\le \varepsilon \le 1$. 

Our numerical calculations given in Fig.~\ref{fig:non-mark} illustrate this point. They show the following:
\begin{itemize}
\item[$\bullet$] For strongly coupled infra-red modes (panels (a), (b)) the dynamics is Markovian for all times. As the infra-red modes become less strongly coupled, some non-Markovianity is built up in time (panels (c), (d)). 

\item[$\bullet$] The build up of non-Markovianity (for weakly coupled infra-red modes, panels (c), (d)) depends on $\varepsilon$. For values of $\varepsilon$ close to $1$ -- the quantum case -- the non-Markovianity is enhanced.  
\end{itemize}

\begin{figure}[h]
(a)\begin{subfigure}{0.48\textwidth}
\centering
\includegraphics[width = \textwidth]{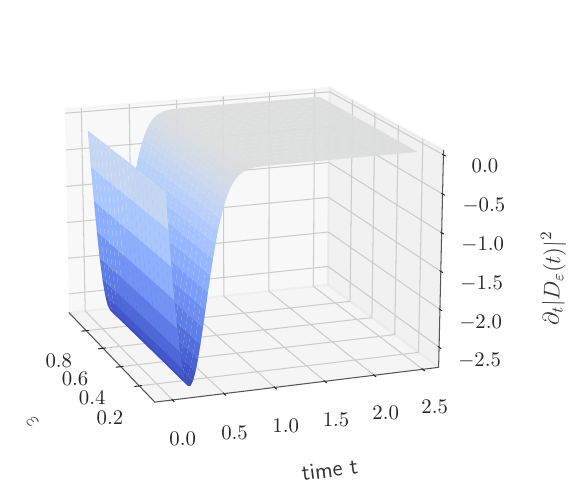}
\end{subfigure}
(b)\begin{subfigure}{0.42\textwidth}
\centering
\includegraphics[width = \textwidth]{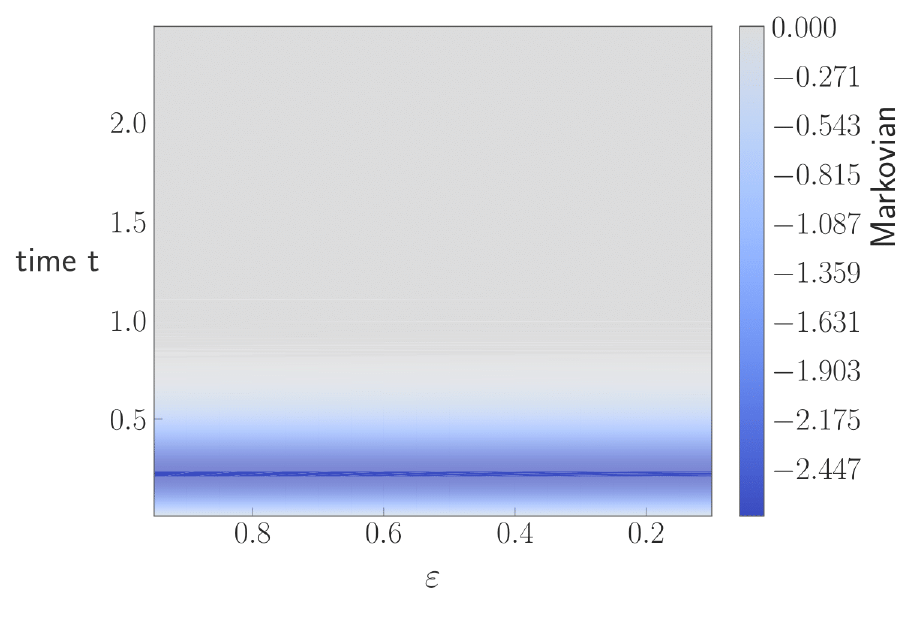}
\end{subfigure}
(c)\begin{subfigure}{0.48\textwidth}
\centering
\includegraphics[width = \textwidth]{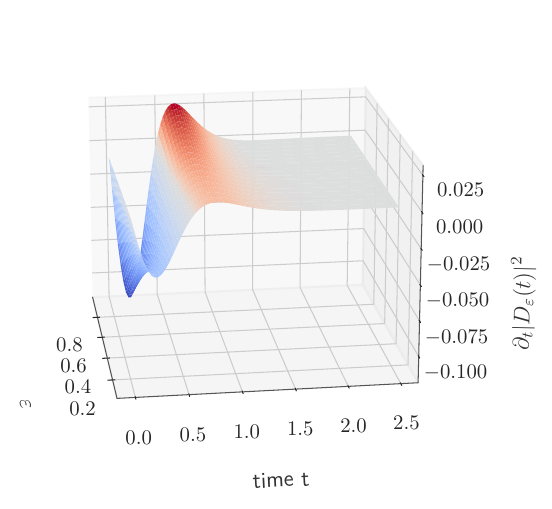}
\end{subfigure} 
(d)  \begin{subfigure}{0.42\textwidth}
\centering
\includegraphics[width = \textwidth]{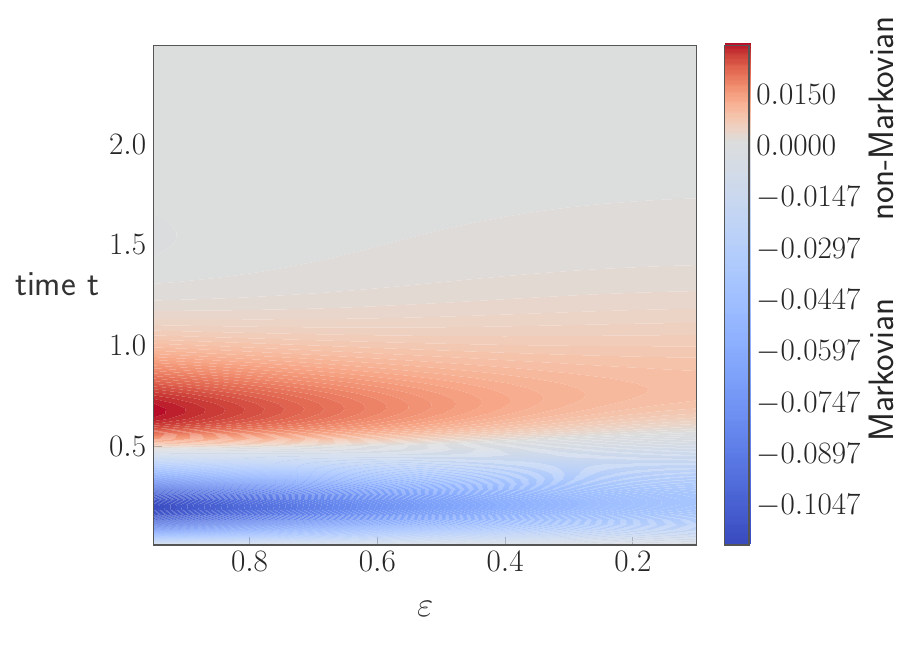}
\end{subfigure}
\captionsetup{labelformat=empty}
\caption{\label{fig:non-mark}Fig.~4: Regions of non-Markovianity of the spin dynamics coupled to the thermal reservoir, as a function of time $t$ and classical parameter $\varepsilon$. The decoherence function $D_\varepsilon(t)$ is given in \eqref{de3}, in which we take $\lambda=\beta'=1$.  The spectral density $\mathcal J(\omega)$ is given by with $g$ as in \eqref{gf0}. For panels (a) and (b) the infra-red behaviour is $\alpha = -1.9$ while for (c) and (d) it is $\alpha=5$.}
\end{figure}

\subsection{Benchmarking}
\label{sec:bench}

The goal of this section is to check directly that the dynamics of the spin obtained by the general quasi-classical theory outlined in Section \ref{sect:qcCFO} is the same as that obtained by taking the quasi-classical limit of the exact dynamics given in Proposition \ref{prop1}.

\begin{prop}
\label{prop:bench}
Consider the energy-conserving model \eqref{m1} with the classical reservoir state determined by the characteristic functional $\chi_0 = \lim_{\varepsilon\rightarrow 0}\chi_\varepsilon$ (as in Section \ref{secCF}). Then the spin dynamics given by \eqref{m18} coincides with the spin dynamics  obtained from the explicit solution given in Proposition \ref{prop1} in the limit $\varepsilon\rightarrow 0$. 
\end{prop}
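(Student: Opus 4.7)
The plan is to evaluate the right-hand side of \eqref{m18} directly in the energy-conserving case and verify that it coincides with the $\varepsilon\to 0$ limit of the explicit formula from \cref{prop1}. The central observation is that for the Hamiltonian \eqref{m1}, the interaction operator in the notation of \eqref{Hamilt} is $G = \tfrac12\sigma_z$, which commutes with $H_\s = \tfrac12\omega_0\sigma_z$. Consequently the time-dependent generator in \eqref{m21.1}--\eqref{m21.2},
\begin{equation}
H_\s + V(e^{-it\omega}f) = \tfrac12\sigma_z\bigl[\omega_0 + \sqrt{2}\lambda\,\mathrm{Re}\langle e^{-it\omega}f, g\rangle\bigr],
\end{equation}
is a scalar multiple of $\sigma_z$ at every time, so the family commutes with itself, and the time-ordered exponential collapses to an ordinary one.

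First I would integrate the resulting scalar phase. The identity $\int_0^t \langle e^{-is\omega}f, g\rangle\,ds = -\langle f, g_t\rangle$, which follows immediately from the definition of $g_t$ in \eqref{m12} by interchanging the $s$-integral with the momentum integral, gives
\begin{equation}
U_t(f) = \exp\Bigl(-\tfrac{i}{2}\sigma_z\bigl[\omega_0 t - \sqrt{2}\lambda\,\mathrm{Re}\langle f, g_t\rangle\bigr]\Bigr).
\end{equation}
Thus $U_t(f)$ is diagonal in the $\sigma_z$-eigenbasis $\{|1\rangle,|2\rangle\}$. The conjugation $U_t(f)\gamma U_t(f)^*$ therefore preserves the diagonal entries of $\gamma$ and multiplies the off-diagonal element $\gamma_{12}$ by the phase factor $e^{-i\omega_0 t}\,e^{i\sqrt{2}\lambda\,\mathrm{Re}\langle f, g_t\rangle}$.

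Next I would integrate over the probability measure $d\mu$ arising from \eqref{16}. Combining the above with \eqref{m18} yields $[\gamma_0(t)]_{ii}=\gamma_{ii}$ and
\begin{equation}
[\gamma_0(t)]_{12} = e^{-i\omega_0 t}\,\gamma_{12}\int_\h d\mu(f)\, e^{i\sqrt 2\,\mathrm{Re}\langle f, \lambda g_t\rangle} = e^{-i\omega_0 t}\,\chi_0(\lambda g_t)\,\gamma_{12},
\end{equation}
where the last equality is precisely the Bochner--Minlos representation \eqref{16} evaluated at the argument $\lambda g_t$. On the other hand, taking $\varepsilon\to 0$ in the formula of \cref{prop1} gives $[\gamma_\varepsilon(t)]_{12}\to e^{-i\omega_0 t}\,\chi_0(\lambda g_t)\,\gamma_{12}$ together with constant populations, matching the above.

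There is no serious obstacle: once the commutativity of the generator is noticed, the argument is a direct computation followed by the Bochner--Minlos identification of $\chi_0$ with the Fourier transform of $\mu$. The only bookkeeping care required is to track the $\sqrt{2}$ convention factor highlighted in the footnote preceding \cref{thm:CFO}, which has already been incorporated into the statement of $V(f)$ in \eqref{m21.2}; this is what ensures the exponent $\sqrt{2}\lambda\,\mathrm{Re}\langle f, g_t\rangle$ in $U_t(f)$ matches the $\sqrt{2}$ appearing in the Fourier pairing of \eqref{16}.
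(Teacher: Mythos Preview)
Your proposal is correct and follows essentially the same route as the paper: exploit that the time-dependent generator is a scalar multiple of $\sigma_z$, integrate the resulting phase explicitly via $\int_0^t \alpha_s(f)\,ds = -{\rm Re}\langle f,g_t\rangle$, read off the $(1,2)$ matrix element, and identify the $\mu$-integral with $\chi_0(\lambda g_t)$ through \eqref{16}. The only cosmetic difference is that the paper writes $U_t(f)$ as a sum of rank-one projections rather than as a single exponential of $\sigma_z$, but the content is identical.
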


{\em Proof of Proposition \ref{prop:bench}.\ }
For the energy-conserving spin-Boson model \eqref{m1}, the unitary  $U_t:=U_{t,0}$ (see \eqref{m21.1}, \eqref{m21.2}) is the solution of 
\begin{equation}
\label{n1}
i\partial_t U_t(f) = \big[\tfrac12\omega_0\sigma_z +\tfrac{\lambda}{\sqrt 2} \alpha_t(f) \sigma_z \big]U_t(f),\qquad U_0(f)=\bbbone,
\end{equation}
with $\alpha_t(f)={\rm Re}\langle e^{-it\omega} f,g\rangle$. Recall that $f\in\h$ is to be integrated over ({\em c.f.} \eqref{m18}) and $g$ is the form factor (see \eqref{m1}). We can solve \eqref{n1} explicitly. Writing $\sigma_z|1\rangle=|1\rangle$ and $\sigma_z|2\rangle = -|2\rangle$ we have 
$$
U_t(f) = e^{-\frac{i}{2} t\omega_0} e^{-\frac{i\lambda}{\sqrt 2} \int_0^t \alpha_s(f) ds} |1\rangle\langle1| + e^{\frac{i}{2} t\omega_0} e^{\frac{i\lambda}{\sqrt 2} \int_0^t \alpha_s(f) ds} |2\rangle\langle 2|.
$$
Carrying out the integrals gives
\begin{eqnarray}
U_t(f) = e^{-\frac{i}{2} t\omega_0} e^{\frac{i\lambda}{\sqrt 2} {\rm Re}\langle f, \frac{1-e^{i\omega t}}{i\omega} g\rangle } |1\rangle\langle 1|  + e^{\frac{i}{2} t\omega_0} e^{-\frac{i\lambda}{\sqrt 2} {\rm Re}\langle f, \frac{1-e^{i\omega t}}{i\omega} g\rangle } |2\rangle\langle 2|.
\end{eqnarray}
Thus,
\begin{eqnarray}
\label{m24.1}
[U_t(f) \gamma U_t(f)^*]_{12} = e^{-i t \omega_0}e^{i\sqrt 2\lambda{\rm Re} \langle f, \frac{1-e^{i\omega t}}{i\omega} g\rangle}\gamma_{12}.
\end{eqnarray}
It follows that the decoherence  according to the formula \eqref{m18} is given by  
\begin{equation}
\label{dynqc2}
[\gamma(t)]_{12} = e^{- i t \omega_0}\widetilde D_0(t)[\gamma(0)]_{12},
\end{equation}
with 
\begin{equation}
\label{dqc2}
\widetilde D_0(t)= \int_\h d\mu(f)\,  e^{i\sqrt 2\lambda{\rm Re} \langle f, \frac{1-e^{i\omega t}}{i\omega} g\rangle} = \chi_0\big(\lambda \tfrac{1-e^{i\omega t}}{i\omega}g\big),
\end{equation}
where the last equality is due to \eqref{16}. It follows that the two spin dynamics in question are equal in the limit $\varepsilon\rightarrow 0$ by comparing \eqref{dqc2} with \eqref{m12}. \hfill $\blacksquare$

\section{The energy exchange model}
\label{sec:exmod}

While the energy conserving model discussed in Section \ref{sec:econmodel} is explicitly solvable, this is not the case for the energy exchange model
\begin{equation}
\label{m27}
H(\varepsilon)= \tfrac{\omega_0}{2} \sigma_z+\d\Gamma(\omega) + \lambda  G \otimes \varphi_\varepsilon(g),
\end{equation}
where $G$ is an arbitrary matrix acting on the spin, not necessarily commuting with $\sigma_z$. A typical example is $G=\sigma_x$, the Pauli $x$-operator.  The quasi-classical theory of Section \ref{sect:qcCFO} applies and gives the spin evolution (c.f. \eqref{m18} - \eqref{m21.2})
\begin{equation}
	\label{eq: z unitary}
i\partial_t U_{t,s}(f) = \left[ \tfrac12 \omega_0 \sigma_z+\sqrt 2 \lambda \alpha_t(f) G \right] U_{t,s}(f), \qquad U_{t,t}(f)=\bbbone
\end{equation}
where 
\begin{equation}
\label{alpha}
\alpha_t(f)={\rm Re}\langle e^{-it\omega} f,g\rangle.
\end{equation}
If the functions $\omega(k)$, $f(k)$ and $g(k)$ are such that $\alpha_t(f)\rightarrow 0$ in the limit $t\rightarrow\infty$, then the `interaction term' $\propto \lambda$ in \eqref{eq: z unitary} becomes small for large times. Physically, the property $\alpha_t(f)\rightarrow 0$ for large times can be expected because $\alpha_t(f)$ is the (real part of the) probability amplitude between the freely evolved function $e^{-it\omega}f$ and the form factor $g$ and if the free field dynamics is `dispersive' then such amplitudes converge to zero in the large (positive or negative) time limit, similarly to what happens in scattering theory. Mathematically, $\alpha_t(f)\rightarrow 0$ as $t\rightarrow\infty$ is justified by the Riemann-Lebesgue lemma.

\subsection{Scattering regime}

Recall that $H_0$ is the uncoupled Hamiltonian, \eqref{Hamilt}. Denote the free and interacting propagators by 
$$
U_0(t,s) = e^{-i (t-s)H_0},\quad U(t,s) \equiv U_{t,s}(f),
$$
where $U_{t,s}(f)$ is the solution of \eqref{eq: z unitary} for a fixed $f$. The wave operators $\Omega_\pm$ are defined by the limits (if they exist) 
\begin{equation}
\label{wops}
\Omega_+ = \lim_{t\rightarrow\infty} [U(t,0)]^* U_0(t,0),\qquad \Omega_- = \lim_{s\rightarrow -\infty} [U(0,s)]^*U_0(0,s).
\end{equation} 
The motivation for the definition is as follows \cite{Dollard, Kato, H1}. In a scattering process we expect that given any state $\Psi_0$ there exist an $f_-$ and an $f_+$ such that 
\begin{equation}
\label{scatt1}
\lim_{s\rightarrow-\infty}\| U(0,s)\Psi_0 - U_0(0,s)f_-\| =0\quad \mbox{and}\quad \lim_{t\rightarrow\infty}\| U(t,0)\Psi_0 - U_0(t,0)f_+\| =0.
\end{equation}
Here, $\Psi_0$ represents the system at time zero (around which the scattering is happening) while $f_-$ is an asymptotically freely evolving state (called the `in-state') and $f_+$ is another asymptotically freely evolving (so-called `out'-) state. The relations \eqref{scatt1} mean that 
$$
\Psi_0=\lim_{s\rightarrow-\infty} [U(0,s)]^*U_0(0,s)f_- = \lim_{t\rightarrow\infty} [U(t,0)]^*U_0(t,0)f_+
$$
from which we get
$$
\Omega_-f_- = \Omega_+f_+,\qquad \mbox{or}\qquad f_+=Sf_-,\qquad S:=[\Omega_+]^{-1} \Omega_-.
$$
The scattering operator $S$ links the incoming to the outgoing states. Recall the definition $\alpha_t(f)$, \eqref{alpha}, and define
\begin{equation}
\label{int-decay}
\|\alpha(f)\|_1\equiv \int_0^\infty |\alpha_t(f)|dt.
\end{equation}

\begin{prop}[Existence of scattering and wave operators]
\label{prop:scatt}
Suppose that $\|\alpha(f)\|_1<\infty$. Then the wave operators $\:\Omega_+$, $\Omega_-$ exist (as limits in the operator norm sense). Moreover, if $|\lambda|\, \|\alpha(f)\|_1<1$, then $\Omega_\pm$ are invertible and the scattering operator $S$ exists. 
\end{prop}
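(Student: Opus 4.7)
My plan is to pass to the interaction picture so that the free spin evolution is absorbed and both wave operators reduce to limits of a single Dyson-type propagator. Setting $H_\s=\tfrac12\omega_0\sigma_z$ and $W(t,s):=e^{itH_\s}U_{t,s}(f)e^{-isH_\s}$, a direct differentiation of \eqref{eq: z unitary} shows that
\begin{equation*}
i\partial_t W(t,s)=V_I(t)\,W(t,s),\qquad V_I(t):=\sqrt 2\,\lambda\,\alpha_t(f)\,e^{itH_\s}G\,e^{-itH_\s},\qquad W(s,s)=\bbbone.
\end{equation*}
Writing $U(t,0)=e^{-itH_\s}W(t,0)$ one readily checks that $U(t,0)^*U_0(t,0)=W(t,0)^*$ and $U(0,s)^*U_0(0,s)=W(s,0)$, so the existence of $\Omega_\pm$ in \eqref{wops} reduces to convergence in operator norm of $W(t,0)$ as $t\rightarrow\pm\infty$.

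For existence, I would use a standard Dyson-series argument based on the pointwise bound $\|V_I(s)\|\le\sqrt 2|\lambda|\|G\|\,|\alpha_s(f)|$. Time-ordering together with the hypothesis $\|\alpha(f)\|_1<\infty$ yields the uniform estimate $\|W(t,0)\|\le\exp(\sqrt 2|\lambda|\|G\|\,\|\alpha(f)\|_1)$, and applying the same bound to the increment $W(t',0)-W(t,0)=-i\int_t^{t'}V_I(s)W(s,0)\,ds$ produces
\begin{equation*}
\|W(t',0)-W(t,0)\|\le\sqrt 2|\lambda|\|G\|\,e^{\sqrt 2|\lambda|\|G\|\,\|\alpha(f)\|_1}\int_t^{t'}|\alpha_s(f)|\,ds.
\end{equation*}
Absolute continuity of the integral of $|\alpha_\cdot(f)|$ then forces $\{W(t,0)\}_{t\ge 0}$ to be Cauchy in operator norm, giving $\Omega_+$; the argument for $\Omega_-$ on $(-\infty,0]$ is entirely analogous, provided integrability of $|\alpha_t(f)|$ is secured on the negative half-line.

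For the invertibility/scattering statement I would keep two complementary routes in mind. The qualitative one exploits that $G=G^*$ and $\alpha_t(f)\in\rx$ make $V_I(t)$ self-adjoint, so each $W(t,0)$ is unitary on $\cx^2$; the operator-norm limits $\Omega_\pm$ are therefore unitary themselves, hence invertible, and $S=\Omega_+^{-1}\Omega_-$ is automatically well defined. The quantitative route, tailored to the smallness hypothesis $|\lambda|\,\|\alpha(f)\|_1<1$, is to write $\Omega_\pm=\bbbone+R_\pm$ with $\|R_\pm\|\le e^{\sqrt 2|\lambda|\|G\|\,\|\alpha(f)\|_1}-1$ from the Dyson bound; in the small-coupling regime this is strictly less than $1$, so a convergent Neumann series constructs $\Omega_\pm^{-1}$, and hence $S$, as explicit norm-convergent series. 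The main point I expect to need care, and in my view the only real obstacle, is the negative-time regime: as written, $\|\alpha(f)\|_1$ only encodes integrability on $[0,\infty)$, so I would either interpret the norm symmetrically or exploit the Fourier-type structure $\alpha_t(f)={\rm Re}\langle e^{-it\omega}f,g\rangle$ to verify that the Riemann--Lebesgue tail controls $|\alpha_t(f)|$ on $(-\infty,0]$ as well.
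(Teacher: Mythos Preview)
Your approach is essentially the paper's: the paper differentiates $[U(t,0)]^*U_0(t,0)$ directly, obtains the integral equation $[U(t,0)]^*U_0(t,0)=\bbbone+i\lambda\int_0^t\alpha_\tau(f)\,[U(\tau,0)]^*GU_0(\tau,0)\,d\tau$, and uses integrability of $\alpha$ together with the trivial bound $\|[U(\tau,0)]^*GU_0(\tau,0)\|\le\|G\|$ to get the Cauchy property and the Neumann-series invertibility. Two remarks on your write-up. First, your identification for $\Omega_-$ is off: $[U(0,s)]^*U_0(0,s)=U(s,0)\,e^{isH_\s}$, whereas $W(s,0)=e^{isH_\s}U(s,0)$; these do not coincide in general, though this does not affect the Cauchy argument, which applies to either object. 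Second, your Dyson exponential bounds are wasteful: since each $W(t,0)$ is unitary you have $\|W(s,0)\|=1$ exactly, so the increment satisfies the \emph{linear} bound $\|W(t',0)-W(t,0)\|\le\sqrt2\,|\lambda|\,\|G\|\int_t^{t'}|\alpha_s(f)|\,ds$ and hence $\|\Omega_\pm-\bbbone\|\le\sqrt2\,|\lambda|\,\|G\|\,\|\alpha(f)\|_1$. That is precisely the paper's estimate (it is informal with the constants), and it is what ties the Neumann series to the stated smallness hypothesis; your exponential bound would require a strictly stronger condition. Your route~(a) via unitarity is in fact cleaner than anything in the paper: operator-norm limits of unitaries on $\cx^2$ are unitary, so $\Omega_\pm$ are invertible without any smallness assumption --- the paper does not make this observation. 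Your caveat about negative times is legitimate; the paper simply says ``an analogous argument gives the existence of $\Omega_-$'' and does not address it.
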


{\bf Proof.} Integrating the equation 
$$
\partial_t [U(t,0)]^* U_0(t,0)  = i\lambda \alpha_t(f) [U(t,0)]^* G U_0(t,0)
$$
we obtain
\begin{equation}
\label{scatt2}
[U(t,0)]^*U_0(t,0) = \bbbone +i\lambda\int_0^t d\tau \: \alpha_\tau(f)\: [U(\tau,0)]^* G U_0(\tau,0)
\end{equation}
and so ($t'\le t$)
\begin{eqnarray*}
\Big\|[U(t,0)]^*U_0(t,0)-[U(t',0)]^*U_0(t',0) \Big\| &=& \Big\| \lambda\int_{t'}^t d \tau \: \alpha_\tau(f)\: [U(\tau,0)]^* G U_0(\tau,0)\Big\| \nonumber\\
&\le& |\lambda|\ \|G\|\int_{t'}^t d\tau\, |\alpha_\tau(f)|\rightarrow 0.
\end{eqnarray*}
This means that $[U(t,0)]^* G U_0(t,0)$ is Cauchy and thus has a limit as a bounded operator on $\cx^2$ as $t\rightarrow\infty$, showing that $\Omega_+$ exists. An analogous argument gives the existence of $\Omega_-$. To show that $\Omega_\pm$ are invertible, we note that the equation \eqref{scatt2} gives
$$
\Omega_+= \bbbone +i\lambda\int_0^\infty d\tau \: \alpha_\tau(f)\: [U(\tau,0)]^* G U_0(\tau,0),
$$
which is an invertible operator provided $|\lambda|\, \|\int_0^\infty d\tau\: \alpha_\tau(f) [U(\tau,0)]^* G U_0(\tau,0)\|<1$. $\Omega_-$ is shown to be invertible analogously.
\hfill $\blacksquare$ 
\medskip

When $\Omega_+$ exists and is invertible, then we have for large positive $t$, $[U(t,0)]^* U_0(t,0)\sim \Omega_+$ or equivalently, $U(t,0)\sim U_0(t,0)\Omega_+^{-1}$, where $A\sim B$ means that $\|A-B\|\rightarrow 0$. An analogous formula holds for large negative times. This means that the dynamics of the spin is asymptotically free ($t\rightarrow\pm\infty$). Furthermore, as $\Omega_\pm=\bbbone +O(\lambda)$, the interacting dynamics of the spin deviates from the free dynamics by a term of $O(\lambda)$ at most, for all times.  

\begin{prop}[Stability of the free dynamics]
\label{prop:asfree}
Let $\mu$ be the measure determining the classical state of the field, \eqref{16}. Suppose that there is a constant $c$ such that for all $f$ in the support of the measure $\mu$ we have $\|\alpha(f)\|_1<c$. Then the quasi-classical dynamics of the spin with initial state $\gamma(0)$, given by \eqref{m18} satisfies
\begin{equation}
\label{scatt3}
\sup_{t\ge 0} \| \gamma_0(t) - e^{-i t H_\s}\gamma(0) e^{i t H_\s}\|\le C|\lambda|
\end{equation}
for some constant $C<\infty$, and where $H_\s=\tfrac12\omega_0\sigma_z$.
\end{prop}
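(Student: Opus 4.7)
My plan is to reduce the claim to a uniform-in-$t$ operator-norm bound on the difference $U_{t,0}(f) - e^{-itH_\s}$, which is already essentially packaged in the proof of \cref{prop:scatt}. First I would fix $f$ in the support of $\mu$ and apply the identity \eqref{scatt2} from the proof of \cref{prop:scatt}, which reads
$$
[U(t,0)]^* U_0(t,0) - \bbbone = i\lambda\int_0^t d\tau\, \alpha_\tau(f) [U(\tau,0)]^* G\, U_0(\tau,0),
$$
with $U_0(t,0) = e^{-itH_\s}$ the free spin propagator on $\cx^2$. Since $U(\tau,0)$ and $U_0(\tau,0)$ are unitary, the right-hand side has operator norm at most $|\lambda|\,\|G\|\int_0^t|\alpha_\tau(f)|d\tau \le |\lambda|\,\|G\|\,\|\alpha(f)\|_1 \le |\lambda|\,\|G\|\,c$. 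Multiplying on the left by $U(t,0)$ (which preserves the norm) yields the uniform bound
$$
\bigl\|U_{t,0}(f) - e^{-itH_\s}\bigr\| \le |\lambda|\,\|G\|\,c \qquad \forall t\ge 0,\ \forall f\in{\rm supp}(\mu).
$$

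Next I would propagate this estimate from the unitaries to the conjugation action. Writing
$$
U_{t,0}(f)\gamma(0) U_{t,0}(f)^* - e^{-itH_\s}\gamma(0) e^{itH_\s} = \bigl[U_{t,0}(f) - e^{-itH_\s}\bigr]\gamma(0) U_{t,0}(f)^* + e^{-itH_\s}\gamma(0)\bigl[U_{t,0}(f)^* - e^{itH_\s}\bigr],
$$
and using $\|\gamma(0)\|\le 1$ together with the unitarity of the remaining factors, I obtain
$$
\bigl\|U_{t,0}(f)\gamma(0) U_{t,0}(f)^* - e^{-itH_\s}\gamma(0) e^{itH_\s}\bigr\| \le 2|\lambda|\,\|G\|\,c.
$$
This bound is uniform in $t\ge 0$ and in $f\in{\rm supp}(\mu)$.

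Finally I would integrate against the probability measure $\mu$ to obtain
$$
\bigl\|\gamma_0(t) - e^{-itH_\s}\gamma(0) e^{itH_\s}\bigr\| \le \int_\h d\mu(f)\bigl\|U_{t,0}(f)\gamma(0) U_{t,0}(f)^* - e^{-itH_\s}\gamma(0) e^{itH_\s}\bigr\| \le 2|\lambda|\,\|G\|\,c,
$$
where I have used \eqref{m18} for $\gamma_0(t)$ and the fact that $e^{\pm itH_\s}$ can be pulled outside the integral. Setting $C = 2\|G\|\,c$ gives \eqref{scatt3}. There is no real obstacle here since all the heavy lifting was already done in \cref{prop:scatt}; the only point to verify is that the bound $\|\alpha(f)\|_1<c$ is truly uniform in $f\in{\rm supp}(\mu)$ (guaranteed by hypothesis), and that $\|G\|<\infty$ (immediate because $G$ acts on $\cx^2$). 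The mild subtlety is the exchange of the integral over $\mu$ with the operator norm, justified by the probability measure property $\mu(\h)=1$ combined with the uniform integrand bound established above.
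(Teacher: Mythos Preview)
Your proof is correct and follows the same approach as the paper: both use \eqref{scatt2} to obtain the uniform bound $\|U_{t,0}(f)-e^{-itH_\s}\|\le |\lambda|\,\|G\|\,\|\alpha(f)\|_1\le c|\lambda|\,\|G\|$, then transfer this to the conjugation action and integrate against $\mu$. Your version is simply more explicit about the telescoping decomposition and the resulting constant $C=2\|G\|c$, whereas the paper condenses these routine steps into a single sentence.
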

In \eqref{scatt3}, $\|\cdot\|$ is any norm on the spin density matrices (they are all equivalent since the system is finite dimensional). 
\medskip

{\bf Proof.} We have 
$$
\gamma_0(t) =\int_\h d\mu(f) U_{t,0}(f) \gamma(0) [U_{t,0}(f)]^*.
$$
Equation \eqref{scatt2} yields (operator norm) $\|U_{t,0}(f) -e^{-i t H_0}\|\le  |\lambda|\, \|\alpha(f)\|_1\le c|\lambda|$. Replacing $U_{t,0}$ by $e^{-i t H_0}$ and similarly for their adjoints and using the last inequality yields \eqref{scatt3}.\hfill $\blacksquare$
\bigskip

The result of Proposition \ref{prop:asfree} means that even when the system interacts with the reservoir for a long time (even $t\rightarrow\infty$), there are no effects on the system beyond the size of $O(\lambda)$, provided the integrability property 
\begin{equation}
\label{alphaint}
\|\alpha(f)\|_1<\infty
\end{equation} 
holds. In the next result we give a condition which ensures \eqref{alphaint}.

\begin{lem}[Sufficient condition for \eqref{alphaint}]
\label{lem1}
Let $\omega(k)=|k|$, $k\in\rx^3$ and write functions $f\in L^2(\rx^3,d^3k)$ as $f(\Sigma,\omega)$ in polar coordinates, $(\Sigma,\omega)\in S^2\times\rx_+$. Suppose that 
\begin{itemize}
\item[\rm (a)] For each $\Sigma\in S^2$, the function $\omega\mapsto f(\Sigma,\omega) g(\Sigma,\omega)$ is twice differentiable, and

\item[\rm (b)] The infrared and ultraviolet behaviour is as
\begin{eqnarray}
\label{compcond}
\omega\rightarrow 0&:& f(\Sigma,\omega) g(\Sigma,\omega) \sim \omega^p,\ \ \ \  \mbox{some $p>-1$},\\
\omega\rightarrow\infty &:& f(\Sigma,\omega) g(\Sigma,\omega) \sim \omega^{-q}, \ \ \mbox{some $q>2$}.
\end{eqnarray}
\end{itemize}
Then the bound \eqref{alphaint} holds.
\end{lem}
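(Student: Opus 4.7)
The plan is to view $\alpha_t(f)$ as a real part of an oscillatory integral in the radial variable $\omega$, and extract decay of order $1/t^2$ by integrating by parts twice in $\omega$. In spherical coordinates $k\leftrightarrow(\Sigma,\omega)$ with $\omega(k)=|k|$ and $d^3k=\omega^2\, d\omega\, d\Sigma$, we have
\begin{equation*}
\alpha_t(f) \;=\; {\rm Re}\int_{S^2}\!\!\int_0^\infty e^{it\omega}\, h(\Sigma,\omega)\, d\omega\, d\Sigma,\qquad
h(\Sigma,\omega):=\overline{f(\Sigma,\omega)}\,g(\Sigma,\omega)\,\omega^2.
\end{equation*}
Under hypothesis (b), $h(\Sigma,\omega)\sim\omega^{p+2}$ as $\omega\to 0$ (with $p+2>1$) and $h(\Sigma,\omega)\sim\omega^{2-q}$ as $\omega\to\infty$ (with $2-q<0$), so in particular $h(\Sigma,\cdot)\in L^1(\mathbb{R}_+)$ and the angular integral produces a finite quantity controlled by $|\langle f,g\rangle|$. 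This immediately handles the short-time regime: for $t\in[0,1]$ we get the uniform bound $|\alpha_t(f)|\le |\langle f,g\rangle|<\infty$.

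For the long-time regime $t\ge 1$, the idea is to gain two powers of $1/t$ by integrating by parts twice in $\omega$ against the oscillatory factor $e^{it\omega}$. Using hypothesis (a) to justify the manipulation,
\begin{equation*}
\int_0^\infty e^{it\omega} h(\Sigma,\omega)\, d\omega \;=\; \frac{e^{it\omega}h(\Sigma,\omega)}{it}\bigg|_0^\infty -\frac{1}{it}\int_0^\infty e^{it\omega}\partial_\omega h(\Sigma,\omega)\,d\omega,
\end{equation*}
and after a second integration by parts one arrives formally at $-t^{-2}\int_0^\infty e^{it\omega}\partial_\omega^2 h(\Sigma,\omega)\,d\omega$. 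The four boundary terms must be shown to vanish: at $\omega=\infty$ both $h$ and $\partial_\omega h$ decay like $\omega^{2-q}$ and $\omega^{1-q}$ respectively, which vanish because $q>2$; at $\omega=0$, $h(\Sigma,0)=0$ since $p+2>1>0$, and $\partial_\omega h(\Sigma,\omega)\sim (p+2)\omega^{p+1}\to 0$ since $p+1>0$ by hypothesis $p>-1$. Hence
\begin{equation*}
|\alpha_t(f)|\;\le\; \frac{1}{t^2}\int_{S^2}\!\!\int_0^\infty |\partial_\omega^2 h(\Sigma,\omega)|\,d\omega\, d\Sigma.
\end{equation*}
Reading off the asymptotic behaviour of $\partial_\omega^2 h$, one gets $\partial_\omega^2 h\sim \omega^p$ near $0$ (integrable since $p>-1$) and $\partial_\omega^2 h\sim \omega^{-q}$ near $\infty$ (integrable since $q>2$), so the double integral is finite and $|\alpha_t(f)|\le C/t^2$ for $t\ge 1$. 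Combining with the short-time bound gives $\|\alpha(f)\|_1\le |\langle f,g\rangle|+\int_1^\infty C\,t^{-2}\,dt<\infty$, as claimed.

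The main technical obstacle is making the integration by parts fully rigorous when the assumption on $fg$ is stated only as a pointwise asymptotic relation $\sim$: one must interpret this as also controlling the first two derivatives with the expected power-law rates (so that the boundary terms actually vanish and $\partial_\omega^2 h$ is integrable), which is the standard reading in this setting but should be made explicit. A minor additional point is the angular integration over $S^2$: the estimates above are pointwise in $\Sigma\in S^2$, and to pass to the desired conclusion one needs the implicit constants to be uniform (or at least integrable) in $\Sigma$, which again follows from the standard interpretation of hypothesis (b) as uniform in $\Sigma$. Once these interpretational matters are settled, the argument above is routine.
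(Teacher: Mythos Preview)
Your proposal is correct and follows the same approach as the paper: write $\alpha_t(f)$ as an oscillatory radial integral and integrate by parts twice in $\omega$ to extract $t^{-2}$ decay, then conclude $\|\alpha(f)\|_1<\infty$. Your version is in fact more detailed than the paper's, which simply states the integration-by-parts identity \eqref{ibp} without explicitly checking the boundary terms, the short-time bound, or the integrability of $\partial_\omega^2 h$; your discussion of these points (and of the interpretational issues with the asymptotic hypotheses) fills in what the paper leaves implicit.
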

\bigskip

{\bf Proof.} We have $\langle e^{-i\omega t}f,g\rangle=\int_{S^2}d\Sigma\int_0^\infty d\omega\: \omega^2 e^{i\omega t} \overline{f(\Sigma,\omega)}g(\Sigma,\omega)$.  By using $e^{i\omega t} = \frac{1}{it} \partial_\omega e^{i\omega t}$ and integrating by parts in the radial variable $\omega$ twice we obtain
\begin{equation}
\label{ibp}
\langle e^{-i\omega t}f,g\rangle= (-\frac{1}{it})^2\int_{S^2}d\Sigma \int_0^\infty d\omega\: e^{i\omega t} \partial_\omega^2 \big[\omega^2\overline{f(\Sigma,\omega)}g(\Sigma,\omega)\big].
\end{equation}
Hence $\alpha_t(f)$ decays as $t^{-2}$ for large $t$, so that $\|\alpha(f)\|_1<\infty$.\hfill $\blacksquare$

\subsection{Examples and illustrations}

\begin{itemize}
\item[1.] We point out that Propositions \ref{prop:scatt} and \ref{prop:asfree} hold for the energy conserving as well as the energy exchange models.  
We have shown in Corollary \ref{cor:2} that for the energy conserving model, the spin coupled to the classical coherent state or the classical BEC undergoes partial, not full decoherence. The decoherence function is of the form $1+O(\lambda)$. This illustrates the result of Proposition \ref{prop:asfree} from a different point of view (in Corollary \ref{cor:2} we only needed condition \eqref{49} while Lemma \ref{lem1} requires the stronger assumption \eqref{compcond}). 
    
\item[2.] For a pure coherent state of the reservoir the measure $\mu$ is concentrated at a single $f_0\in\h$ (see Section \ref{sec:cs}). Then $\|\alpha(f_0)\|_1<\infty$ provided $f\equiv f_0$ and $g$ satisfy \eqref{compcond}.  We conclude that the spin dynamics is asymptotically free and the free dynamics is stable in the sense of Propositions \ref{prop:scatt} and \ref{prop:asfree}.

\item[3.] Consider the coherent state of the reservoir given by mixing the states $e^{-i\theta}f_0$ uniformly over $S^1$ (same as Bose-Einstein condensate), {\em c.f.} Sections \ref{sec:cs}, \ref{sect:BEC}. We have 
$$
|\alpha_t(e^{-i\theta}f_0)| = |{\rm Re}\, e^{-i\theta} \langle e^{-i t\omega}f,g\rangle|\le |\langle e^{-i t\omega} f_0,g\rangle|.
$$
Then $\|\alpha(e^{-i \theta}f_0)\|_1\le c$ for all $\theta\in S^1$ provided $f\equiv f_0$ and $g$ satisfy \eqref{compcond}. The spin dynamics is again asymptotically free and the free dynamics is stable (the conditions of Propositions \ref{prop:scatt} and \ref{prop:asfree} are satisfied).
\end{itemize}

\subsection{Non-scattering regime}
\label{sect:noscatt}

We have shown in Corollary \ref{cor:2} that the spin undergoes {\em full} decoherence when coupled to the classical thermal state via an energy conserving interaction. The result was obtained by explicit calculation, and it cannot be deduced from Theorem \ref{thm:CFO} for the technical reason that the measure $\mu$ in \eqref{m18} is not a true measure on $\mathfrak h$ in the thermal case. Here we have an instance where Propositions \ref{prop:scatt} and \ref{prop:asfree} do not apply and in fact the change in the spin dynamics is not limited by $O(\lambda)$.

A situation where Theorem \ref{thm:CFO} does apply, but still the free dynamics can change over time by more than $O(\lambda)$, because  \eqref{alphaint} is {\em not} satisfied, is described by {\em polaron-type models}. In its original version, the polaron model describes electrons in a crystal modeled by phonons which make up the reservoir \cite{F}. The dispersion relation is given by $\omega(k)=\omega_\r$, a positive constant; the field Hamiltonian is proportional to the number operator. In our setting, the system Hilbert space has finite dimension ($=2$ for simplicity) and so the model does not describe electrons, but rather two effective degrees of freedom interacting with phonons, making up a polaron-type model. In this setting the function (see \eqref{alpha})
$$
\alpha_t(f) =  {\rm Re} \, e^{i\omega_\r t}\langle f,g\rangle
$$
is $2\pi/\omega_\r$-periodic in time and one can analyze the time-dependent evolution equation \eqref{eq: z unitary} by means of Floquet theory. The propagator $U_t(f)$ is $2\pi/\omega_\r$-periodic in $t$ for every $f$ and consequently so is the density matrix $\gamma(t) = \int_\h d\mu(f) U_t(f)\gamma U_t(f)^*$. As a particular example we may consider initial reservoir states given by a measure $\mu(f)$ supported on $f\in\h$ such that $\langle f,g\rangle\in\rx$. Then $\alpha_t(f)=\lambda\langle f,g\rangle \cos(\omega_\r t)$ and the time-dependent evolution equation \eqref{eq: z unitary} becomes
$$
i\partial_t U_t(f) = \Big[ \tfrac12 \omega_0 \sigma_z +\tfrac{\lambda}{\sqrt 2} \langle f,g\rangle \cos(\omega_\r t)  \sigma_x \Big] U_t(f), \qquad U_0(f)=\bbbone.
$$
This is the equation of a two-level system interacting with a classical electric field $\mathcal E(t)\propto \cos(\omega_\r t)$. The solutions to this equation have been studied widely and in detail in the literature, see for instance \cite{Grifoni} and references therein. Our theory on the quasi-classical limit thus recovers some known equations for open systems in contact with classical fields. 
\medskip

We show now how another common model, a two-level system coupled to a circularly polarized classical field, can be derived using our quasi-classical approach. Consider the Hamiltonian
\begin{equation}
\label{circpol}
    H_\varepsilon = \frac{\omega_0}{2} \sigma_z + d \Gamma(\omega_R) + \tfrac{1}{2} \Big( \sigma_x \otimes \varphi_\varepsilon(g) + \sigma_y \otimes \varphi_\varepsilon (ig) \Big), 
\end{equation}
where the dispersion relation is constant, $\omega(k) = \omega_R>0$. We take the reservoir to be in a Bose-Einstein condensate state for the field, determined by $\mu = \int_{0}^{2 \pi} \frac{d \theta}{2 \pi} \delta_{e^{i \theta} f_0 }$, for a fixed wave function $f_0$ (see \eqref{76} and text thereafter). Let us take a form factor $g$ such that ${\rm Im}\langle f_0,g\rangle=0$. For $\theta$ fixed, the unitary $U_t(\theta)$ appearing in \eqref{eq: z unitary} satisfies  $i\partial_t U_t(\theta) = H(\theta,t)U_t(\theta)$ and $U_0(\theta)=\bbbone$, where the Hamiltonian is given by 
\begin{equation*}
    H(\theta,t) = \tfrac{\omega_0}{2} \sigma_z + \tfrac{1}{2}  {\rm Re}\langle f_0, g \rangle\Big( \cos (\omega_R t-\theta) \sigma_x - \sin (\omega_R t-\theta) \sigma_y \Big).
\end{equation*}
The dynamics can be solved exactly by applying the unitary transformation $ e^{-i(\omega_R t-\theta) \frac{\sigma_z}{2}}$. Namely, $V(t) = e^{-i(\omega_R t-\theta) \frac{\sigma_z}{2}} U(\theta,t)$ solves a time-independent and $\theta$-independent equation 
$$ i \partial_t V(t) = \widetilde{H} V(t) := \tfrac{1}{2} \Big( (\omega_0 + \omega_R) \sigma_z + {\rm Re}\langle f_0, g \rangle \sigma_x \Big) V(t),\qquad V(0) = e^{i\frac\theta 2 \sigma_z}.
$$
In terms of the Rabi frequency $\Omega_{\rm Rabi} = \frac{1}{2} \big( (\omega_0 + \omega_R)^2 + ({\rm Re}\langle f_0, g \rangle)^2 \big)^{\frac{1}{2}}$, we obtain
\begin{eqnarray*}
    V(t) &=&e^{-i \widetilde{H} t} e^{i\frac\theta 2 \sigma_z}\\
    &=& \cos(\Omega_{\rm Rabi} t) e^{i\frac\theta 2 \sigma_z} -i \sin (\Omega_{\rm Rabi}t) \biggl(  \frac{\omega_0 + \omega_R }{2\Omega_{\rm Rabi}} \sigma_z + \frac{{\rm Re}\langle f_0, g \rangle}{2 \Omega_{\rm Rabi}} \sigma_x \biggr) e^{i\frac\theta 2 \sigma_z}.
\end{eqnarray*}
Then $ U(\theta,t) = e^{i(\omega_R t-\theta)\frac{\sigma_z}{2}} V(t) = e^{i(\omega_R t-\theta)\frac{\sigma_z}{2}} e^{-i \widetilde{H} t} e^{i\frac\theta 2 \sigma_z} $ and the reduced system density matrix is given by (c.f. \eqref{m18})
\begin{eqnarray*}
     \gamma(t) &=& \int_{0}^{2 \pi} \frac{d \theta}{2 \pi} U(\theta,t) \gamma_0 U^*(\theta,t)   \\ 
     &=& e^{\frac i2\omega_R t\sigma_z} \biggl( \int_{0}^{2 \pi} \frac{d \theta}{2 \pi} e^{-i\frac{\theta}{2}\sigma_z} e^{-i \tilde{H} t} e^{i\frac\theta 2 \sigma_z} \gamma_0 e^{-i\frac\theta 2 \sigma_z} e^{i \tilde{H} t} e^{i\frac{\theta}{2}\sigma_z} \biggr) e^{-\frac i2\omega_R t \sigma_z}. 
\end{eqnarray*}
Carrying out the integrals for the explicit solutions for an initial density matrix 
$$
\gamma_0 = \begin{pmatrix}
  \gamma_{11} & \gamma_{12}\\
  \bar{\gamma}_{12} & 1-\gamma_{11},
\end{pmatrix}
$$
we obtain the following formulas for the components of $\gamma(t)$:
\begin{eqnarray*}
    \gamma_{11}(t) &=& \Big[\cos^2(\Omega_{\rm Rabi} t) + \frac{(\omega_0 +\omega_R)^2}{4 \Omega_{\rm Rabi}^2} \sin^2(\Omega_{\rm Rabi} t) \Big] \gamma_{11}\\
    && +\frac{({\rm Re}\langle f_0, g \rangle)^2}{4 \Omega_{\rm Rabi}^2} \sin^2 (\Omega_{\rm Rabi}t) (1-\gamma_{11}) \\
    \gamma_{12}(t) &=& e^{i \omega_R t} \Big[ \cos(\Omega_{\rm Rabi}t) -i \frac{(\omega_0 +\omega_R)}{2 \Omega_{\rm Rabi}} \sin(\Omega_{\rm Rabi}t) \Big]^2 \gamma_{12}.
\end{eqnarray*}
We see that both, the populations and the coherence, oscillate with frequency $ \Omega_{\rm Rabi}$ attaining their initial and maximal values periodically in time. The upshot of this analysis is that our quasi-classical limit of a fully quantum system plus reservoir model gives the well-known reduced equation for a two-level system interacting with a classical circularly polarized field (Hamiltonian \eqref{circpol}). Moreover, we can solve this equation exactly and we find that both the populations and the coherences are periodic in time.

\subsection{ Symmetry considerations}
\label{subs:symmetry}

According to Theorem \ref{thm:CFO}, the quasi-classical dynamics is given by
\begin{equation}
\label{m18'}
 \gamma(t) = 
\int_\h d\mu(f) U_t(f)\gamma_0 U_t(f)^*,
\end{equation}
where $\gamma_0$ is the initial density matrix of the spin, $\mu$ is the reservoir probability measure on $\h$ arising from the limit characteristic function $\chi_0$, as given in \eqref{16}, and where the unitary $U_t$ satisfies the equation \eqref{eq: z unitary} for the spin-reservoir model. We now explore some symmetry properties of $\gamma(t)$ under the hypothesis that the measure $\mu$ is even, that is,
\begin{equation}
\label{evenmeas} 
\mu(-E) = \mu(E),
\end{equation}
for any measurable set $E$ of $ \h$. 
This condition is satisfied in particular for Gaussian centered measures as well as the measure obtained through the limit $\varepsilon \rightarrow 0 $ in the Bose-Einstein condensate case, as can be seen from \eqref{76} and the related discussion.
\begin{prop}
\label{prop:sym}
Suppose that \eqref{evenmeas} holds and that $G$ is off-diagonal in the $\sigma_z$-basis. Then the diagonal and the off-diagonal density matrix elements of $\gamma(t)$ evolve independently. In particular, if $\gamma_0$ is diagonal, then so is $\gamma(t)$ for all $t$ and if $\gamma_0$ is off-diagonal, then so is $\gamma(t)$. 
\end{prop}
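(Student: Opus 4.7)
The plan is to exploit a $\zx_2$-symmetry of the evolution equation \eqref{eq: z unitary} that is implemented by conjugation with $\sigma_z$, combined with the evenness of the measure. Since $\alpha_t(f) = {\rm Re}\langle e^{-it\omega}f,g\rangle$ is real-linear in $f$, the map $f\mapsto -f$ sends $\alpha_t(f)$ to $-\alpha_t(f)$. Because $G$ is off-diagonal in the $\sigma_z$-basis we have $\sigma_z G \sigma_z = -G$, while $\sigma_z H_\s \sigma_z = H_\s$. Hence
\begin{equation*}
\sigma_z \big[\tfrac12\omega_0\sigma_z + \sqrt 2 \lambda \alpha_t(f) G\big]\sigma_z = \tfrac12\omega_0\sigma_z - \sqrt 2 \lambda \alpha_t(f) G = \tfrac12\omega_0\sigma_z + \sqrt 2 \lambda \alpha_t(-f) G.
\end{equation*}

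This observation, together with $\sigma_z \bbbone \sigma_z = \bbbone$ and uniqueness of solutions to \eqref{eq: z unitary}, yields the key intertwining identity
\begin{equation*}
\sigma_z U_t(f) \sigma_z = U_t(-f), \qquad f\in\h,\ t\in\rx.
\end{equation*}
Insert this into \eqref{m18'} and conjugate with $\sigma_z$ on both sides:
\begin{equation*}
\sigma_z \gamma(t)\sigma_z = \int_\h d\mu(f)\, U_t(-f) \big(\sigma_z\gamma_0\sigma_z\big) U_t(-f)^*.
\end{equation*}
Now apply the hypothesis \eqref{evenmeas} and perform the change of variable $f\mapsto -f$ to conclude $\sigma_z\gamma(t)\sigma_z = \Phi_t(\sigma_z\gamma_0\sigma_z)$, where $\Phi_t$ denotes the dynamical map $\gamma_0\mapsto \gamma(t)$ defined by \eqref{m18'}. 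In other words, $\Phi_t$ commutes with the involution $X\mapsto \sigma_z X\sigma_z$.

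The conclusion then follows by decomposing $\gamma_0 = D + O$ into its diagonal part $D$ (the $+1$ eigenspace of $X\mapsto\sigma_z X\sigma_z$) and its off-diagonal part $O$ (the $-1$ eigenspace). By linearity of $\Phi_t$ and the commutation just established, $\Phi_t(D)$ lies in the $+1$ eigenspace (diagonal) and $\Phi_t(O)$ lies in the $-1$ eigenspace (off-diagonal), so the two sectors evolve independently. Specializing to $O=0$ or $D=0$ gives the two stated consequences.

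The only mild obstacle is verifying the intertwining identity $\sigma_z U_t(f)\sigma_z = U_t(-f)$ rigorously, which I would handle by differentiating $t\mapsto\sigma_z U_t(f)\sigma_z$ and checking that it solves the same time-dependent Schr\"odinger equation as $U_t(-f)$ with the same initial condition; uniqueness of solutions to \eqref{eq: z unitary} (which holds on the finite-dimensional space $\cx^2$) then closes the argument. Everything else is a direct change of variables under the even measure and a straightforward linear decomposition of $2\times 2$ matrices.
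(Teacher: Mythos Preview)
Your proposal is correct and follows essentially the same route as the paper's proof: both exploit the relation $\sigma_z G\sigma_z=-G$ together with $\alpha_t(-f)=-\alpha_t(f)$ to show that conjugation by $\sigma_z$ interchanges $f\leftrightarrow -f$ in the dynamics, and then use the evenness of $\mu$ to conclude that diagonal and off-diagonal parts decouple. Your formulation via the intertwining $\sigma_z U_t(f)\sigma_z=U_t(-f)$ and the eigenspace decomposition of the involution $X\mapsto\sigma_z X\sigma_z$ is slightly more streamlined than the paper's component-by-component Pauli analysis, but the underlying argument is the same.
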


{\bf Proof.} For a fixed $f$, the integrand of \eqref{m18'}, $\gamma(f,t) \equiv U_t(f)\gamma_0 U_t(f)^*$, is the unique solution of 
\begin{equation}
\label{xy1}
i \partial_t \gamma(f,t) = \left[ \frac{\omega_0}{2} \sigma_z + \sqrt{2} {\rm Re} \langle e^{-i t \omega} f | g \rangle G , \gamma(f,t) \right], 
\end{equation}
with initial condition 
\begin{equation}
\label{xy2}
\gamma(f,0)=\gamma_0.
\end{equation}
The space of bounded operators on $\mathbb C^2$, denoted $\mathcal B(\mathbb C^2)$, is a Hilbert space with inner product $\langle X,Y\rangle = \tr(X^*Y)$. In the orthonormal basis $\{\tfrac{1}{\sqrt 2}\mathbf 1, \tfrac{1}{\sqrt 2}\sigma_x,\tfrac{1}{\sqrt 2}\sigma_y,\tfrac{1}{\sqrt 2}\sigma_z\}$, any element $A\in \mathcal B(\mathbb C^2)$ is decomposed as $A=A^0+A^x+A^y+A^z$, where $A^\alpha = \tfrac12\sigma_\alpha\, \tr(\sigma_\alpha A)$, $\alpha\in\{x,y,z\}$ and $A^0=\tfrac{1}{2}\tr(A)\mathbf 1$.

As the conjugation with $\sigma_z$ leaves $\mathbf 1$ and $\sigma_z$ invariant and inverts the sign of $\sigma_x$ and $\sigma_y$, we have 
\begin{equation}
\label{xy0}
\sigma_z A \sigma_z = A^0-A^x-A^y+A^z.
\end{equation}
The component $G^0\propto \mathbf 1$ does not intervene in the dynamics \eqref{xy1}, as this part commutes with $\gamma(f,t)$. We may thus take $G^0=0$, which combined with $G^z=0$ implies that $G$ is purely off diagonal in the $\sigma_z$-basis. Then by \eqref{xy0}, $\sigma_z G\sigma_z=-G$. Conjugating \eqref{xy1} with $\sigma_z$ shows that $\sigma_z\gamma(f,t)\sigma_z$ solves the  equation \eqref{xy1}  with $f$ changed to $-f$, and has the initial condition $\sigma_z\gamma_0\sigma_z$.
We decompose 
\begin{equation}
\label{decomp}
\gamma(f,t)=\gamma_d(f,t)+\gamma_{od}(f,t),
\end{equation}
where $\gamma_d(f,t)$ is the solution of \eqref{xy1} with initial condition $\gamma_0^0+\gamma_0^z$, that is the diagonal part of $\gamma_0$ and $\gamma_{od}(f,t)$ is the solution of \eqref{xy1} with initial condition $\gamma_0^x+\gamma_0^y$, that is the off-diagonal part of $\gamma_0$. Now $\gamma_d(-f,t)$ satisfies the same differential equation as $\sigma_z\gamma_d(f,t)\sigma_z$ with the same initial condition (as $\gamma_d(f,0)$ is invariant under conjugation with $\sigma_z$). By the uniqueness of the solution we have 

\begin{equation}
\label{4}
 \gamma_d(f,t) = \sigma_z\gamma_d(-f,t)\sigma_z\qquad\mbox{and}\qquad
 \gamma_{od}(f,t) = -\sigma_z\gamma_{od}(-f,t)\sigma_z,
\end{equation}
where the first negative sign in the equation involving $\gamma_{od}$ is due to the sign switch in the initial condition when conjugating with $\sigma_z$. We now take the components of \eqref{4}, and using \eqref{xy0} we arrive at
\begin{equation}
\label{xy5}
\gamma^\alpha_d(f,t) =\gamma^\alpha_d(-f,t), \quad \mbox{$\alpha=0,z$, \ while}\quad   \gamma^\alpha_d(f,t) =-\gamma^\alpha_d(-f,t),\quad\mbox{$\alpha=x,y$}
\end{equation}
and
\begin{equation}
\label{xy6}
\gamma^\alpha_{od}(f,t) =-\gamma^\alpha_{od}(-f,t), \quad \mbox{\ $\alpha=0,z$,\  while}\quad  \gamma^\alpha_{od}(f,t) = \gamma^\alpha_{od}(-f,t),\quad\mbox{$\alpha=x,y$}.
\end{equation}
Consider now a measure $\mu$ which is invariant under the transformation $f\mapsto -f$. Then we have $\int_{L^2} d\mu(f) A(f) = \int_{L^2} d\mu(f) A(-f)$ and the integral over all odd functions in $f$ vanish, so that by \eqref{xy5}, \eqref{xy6},
\begin{equation}
\label{xy7}
\int_{\h} \gamma_d^\alpha(f,t) = 0 = \int_{\h} \gamma_{od}^\beta(f,t),\qquad \mbox{for $\alpha=x,y$ and $\beta=0,z$}.
\end{equation}
By \eqref{decomp} and \eqref{xy7} we obtain 
\begin{equation}
\label{xy8}
\gamma(t)= \int_{\h} d\mu(f) \gamma(f,t) = \int_{\h} d\mu(f)\big[ \gamma_d^0(f,t) +\gamma_d^z(f,t) + \gamma_{od}^x(f,t) +\gamma_{od}^y(f,t)\big].
\end{equation}
If $\gamma_0$ is purely off-diagonal then $\gamma_d(f,t)=0$ for all $t$  and the first two integrands on the right side of \eqref{xy8} vanish. Then $\gamma(t)$ has only components along $\sigma_x$ and $\sigma_y$, so it is off-diagonal as well. Conversely, if $\gamma_0$ is diagonal, then $\gamma_{od}(f,t)=0$ for all $t$ and $\gamma(t)$ is purely diagonal, as only components along $\mathbf 1$ and $\sigma_z$ are non-vanishing in the integral \eqref{xy8}. This concludes the proof of Proposition \ref{prop:sym}.\hfill $\blacksquare$

\bigskip

{\bf Acknowledgements.} We thank two anonymous referees for taking their time to assess this work, give us valuable feedback and point out  aspects which may open up further research on the topic.  M. Correggi and M. Falconi acknowledge the supports of PNRR Italia Domani and Next
Generation EU through the ICSC National Research Centre for High Performance Computing, Big Data and
Quantum Computing. M. Correggi, M. Falconi and M. Fantechi also acknowledge the MUR grant “Dipartimento
di Eccellenza 2023-2027” of Dipartimento di Matematica, Politecnico di Milano. M. Fantechi acknowledges the support of the ``Gruppo Nazionale di Fisica Matematica (GNFM)'' section of the ``Istituto Nazionale di Alta Matematica (INdAM)'', and expresses gratitude to the Department of Mathematics and Statistics of the Memorial University of Newfoundland, where this work was completed.
M. Merkli is grateful to
the Politecnico di Milano for the hospitality and the financial support of the visit during which this work was
conceived, as well as to the Natural Sciences and Engineering Research Council of Canada (NSERC) for support through a Discovery Grant.

\bibliographystyle{plainnat}
\bibliography{referencesCoFaFaMe}

\end{document}